\newtheorem{theorem}{Theorem}
\newtheorem{lemma}[theorem]{Lemma}
\newtheorem{corollary}[theorem]{Corollary}
\newtheorem{proposition}[theorem]{Proposition}
\newcommand{\floor}[1]{\lfloor #1 \rfloor}
\newcommand{\ceil}[1]{\lceil #1 \rceil}
\newcommand{\ang}[1]{\langle #1 \rangle}
\newcommand{\on}{\textit{on}}
\newcommand{\off}{\textit{off}}
\newcommand{\F}{\mathcal F}
\newcommand{\E}{\text{E}}
\newcommand{\LOOKr}[1]{{\textcolor{red}{#1}}}
\begin{document}

\title{Near Optimality in Covering and Packing Games\\ by Exposing Global Information}
\author{
Maria-Florina Balcan\thanks{School of Computer Science, Georgia
Institute of Technology, Atlanta GA 30332. Email:
ninamf@cc.gatech.edu.}
%    \affaddr{Georgia Tech}\\
%    \affaddr{School of Computer Science}\\
%    \email{ninamf@cc.gatech.edu}
% 2nd. author
\and Sara Krehbiel\thanks{School of Computer Science, Georgia
Institute of Technology, Atlanta GA 30332. Email: sarak@gatech.edu.}
%    \affaddr{Georgia Tech}\\
%    \affaddr{School of Computer Science}\\
%    \email{sarak@gatech.edu}
% 3rd. author
\and Georgios Piliouras\thanks{School of Electrical \& Computer
Engineering, Georgia Institute of Technology, Atlanta GA 30332.
Email: georgios.piliouras@ece.gatech.edu}
%    \affaddr{Georgia Tech}\\
%    \affaddr{School of Electrical \& Computer Engineering,}\\
%    \affaddr{Johns Hopkins University Dept. of Economics}\\
 %  {\tiny \email{georgios.piliouras@ece.gatech.edu}}
 \and
% 4th. author
 Jinwoo Shin\thanks{Algorithms \& Randomness Center, Georgia
Institute of Technology, Atlanta GA 30332. Email:
jshin72@cc.gatech.edu.}
 %   \affaddr{Georgia Tech}\\
 %   \affaddr{School of Computer Science}\\
 %   \email{jshin72@cc.gatech.edu}
}
%\date{4 February 2011}

\pagestyle{empty}
%\begin{titlepage}

\maketitle

\begin{abstract}
Covering and packing problems can be modeled as games to encapsulate interesting social and engineering settings.  These games have a high Price of Anarchy in their natural formulation.  However, existing research applicable to specific instances of these games has only been able to prove fast convergence to arbitrary equilibria. This paper studies general classes of covering and packing games with learning dynamics models that incorporate a central authority who broadcasts weak, socially beneficial signals to agents that otherwise only use local information in their decision-making.  Rather than illustrating convergence to an arbitrary equilibrium that may have very high social cost, we show that these systems quickly achieve near-optimal performance.

In particular, we show that in the public service advertising model of \cite{BBM}, reaching a small constant fraction of the agents is enough to bring the system to a state within a $\log n$ factor of optimal in a broad class of set cover and set packing games or a constant factor of optimal in the special cases of vertex cover and maximum independent set, circumventing social inefficiency of bad local equilibria that could arise without a central authority.  We extend these results to the learn-then-decide model of \cite{bbm-ics-10}, in which agents use any of a broad class of learning algorithms to decide in a given round whether to behave according to locally optimal behavior or the behavior prescribed by the broadcast signal.  The new techniques we use for analyzing these games could be of broader interest for analyzing more general classic optimization problems in a distributed fashion.
\end{abstract}

%\end{titlepage}

\pagestyle{plain}
\section{Introduction}
Set covering and packing problems are important and interesting not only from a classical optimization point of view, but also as a game theoretic framework for analyzing social problems in which willful agents are inherent cost minimizers and for solving engineering systems problems in which programmable agents have some degree of autonomy in seeking solutions to distributed optimization problems.  In this paper, we model covering and packing problems as games, and we use models from learning theory to describe local decision making by players in these games.  As opposed to previous work, we are interested in demonstrating convergence not to arbitrary local equilibria but to states that are low cost relative to the global optimum.  We accomplish this by incorporating a globally-informed central authority into natural behavior dynamics.

%We begin by describing our game theoretic formulation of the set cover problem, motivating this formulation with some social and engineering applications.  We discuss the importance of equilibrium quality, which we argue has not been adequately considered by many previous related works.  We then describe the dynamics models we choose and present our results.  We conclude the introduction with a review of the related literature.

\paragraph{Problem.} Given a universe of elements with associated costs and a collection of sets of these elements, the minimum weighted set cover optimization problem is to choose the lowest cost subset of elements such that each set is represented by at least one chosen element.  While this problem is NP-hard, good approximation algorithms exist. However, such algorithms tend to be  centralized in nature and require global knowledge.

\paragraph{Game.} We analyze a setting in which a central authority knows a good approximation, but elements are modeled as only locally aware agents with cost functions representing a natural distributed game interpretation of the core optimization problem.  We generalize the problem by not requiring total coverage, rather the importance of covering a given set is determined by its set weight.  Each element $i$ that chooses to be \on\ incurs his own cost $c_i$, and each element $i$ that is \off\ pays the sum of the weights of sets he participates in that do not contain any other \on\ element.  If the element costs are all smaller than the set weights, then the cost-minimizing set of \on\ elements  is also the optimal set cover.  If additionally each set is of size two, then this is the special case of a minimum weighted vertex cover problem.  By simply redefining the cost structure so that $i$ pays $c_i$ if he is \off\ and the sum of weights of fully-covered sets he participates in if he is \on, we can interpret this new game as a packing problem with maximum independent set as a special case.

\paragraph{Social and engineering applications.} Our motivation for this game theoretic approach is two-fold.  The first setting is a social one in which agents have inherent costs associated with being \on\ or \off\ that correlate with the social objective.  As a concrete example, suppose government wishes to set up a network of offices, say homeless shelters, that perform some service to the local community.  Society would like the lowest cost solution that adequately addresses the needs of most communities, but for political reasons it may not be possible to enforce an optimal solution in a top-down manner.  Furthermore, individual counties have competing interests in that they desire their own area to be served but incur some cost by opening a shelter.

Another motivation is the setting in which non-autonomous agents are programmed to make decisions based on their surroundings.  The extensive literature on cooperative control has shown that in this setting many optimization problems can be conveniently solved in a distributed fashion by endowing agents with artificial individual objective functions and cost-minimizing behavior.  Many of these games and dynamics models result in convergence to a Nash equilibrium, or local optimum.  In particular, several papers have modeled sensor networks as a special case of our set cover game.  The elements are autonomous sensors, and a geographic region is a set consisting of elements corresponding to sensors that could cover that region.  A sensor that is \on\ is charged some fixed cost, whereas a sensor that is \off\ is charged a cost proportional to the number or importance of its adjacent regions that are uncovered by any other sensor.  This application is particularly well-suited for cooperative control because sensors can only observe the behavior of other sensors in their neighborhoods, and the structure of the network may not be known ahead of time, making it impossible for a central designer to program the sensors with an optimal solution.

\paragraph{Equilibrium quality and dynamics models.} Much of the work on cooperative control and
dynamics-based algorithmic game theory only guarantees that systems
converge to some equilibrium. Many games, however, have a high Price
of Anarchy (PoA), where PoA means the worst case ratio between the
social cost in an equilibrium and that of the global optimal
configuration (see Section \ref{general} for its formal definition).
The following special case illustrates that PoA is $\Omega(n)$ in
our set cover game. Suppose $n$ agents (or players) are charged some
amount $c<1$ when they are \on\ and otherwise penalized 1 for every
incident uncovered set. Then a star graph in which vertices are
agents and edges are sets has a global optimum with only the center
\on, yielding social cost $c$, compared to a low quality Nash
Equilibrium in which only the center is \off, yielding social cost
$c(n-1)$.
%, illustrating that Price  of Anarchy is $\Omega(n)$.

The more general problem of dynamics for games with high PoA is
addressed in \cite{BBM, bbm-ics-10}, in which authors propose three
models of distributed and semi-selfish social behavior in a general
repeated game setting. The models share the common feature that a
central authority has knowledge of some joint strategy profile with
low social cost, and this authority broadcasts this strategy in the
hopes that players will adopt their prescribed strategies.
Specifically, the public service advertising model (PSA) of
\cite{BBM} assumes that each agent independently has an $\alpha$
probability of receiving and temporarily adopting the advertising
strategy.  Those that do not receive and adopt their prescribed
strategy behave in a myopic best response manner.  This model is
well-suited for an engineering systems setting, where we do not
expect all components to receive the central authority's signal. The
learning models of \cite{bbm-ics-10} assume that each agent uses any
of a broad class of learning algorithms to continually choose
between acting according to their local best response move and their
broadcasted signal. In the learn-then-decide (LTD) model, agents
eventually commit to one of these options. These models are better
motivated by a social setting where agents that are only locally
aware are interested in exploring the advertising strategy with the
hopes that it will benefit them personally.  These papers provide
high quality guarantees for particular games, including fair
cost-sharing and party affiliation games.

\paragraph{Our results.} The positive theoretical guarantees about social welfare in the outcomes of the games
studied in the advertising and learning models of \cite{BBM,
bbm-ics-10} serve as motivation to use these models in studying our
general general classes of set cover and packing games, which apply
to engineering systems applications such as sensor networks as well
as more purely game theoretic settings.  For the case where costs of
agents and weights of sets are bounded below and above by constants,
we show the following:

% Now we will state our main
%results in our set cover games below.

%when costs of players and
%We refer to the largest and smallest agent costs and set weights as
%$c_{\max}, c_{\min}, w_{\max}, w_{\min}$, and we let $\Delta_2$ be
%the `second-order' maximum degree of the hypergraph induced by
%sets.\footnote{$\Delta_2$ is the maximum number of sets (or edges)
%containing given two elements (or vertices).}
%largest number such that no
%two elements are
%contained in more than $\Delta_2$ of the same sets.
%In case $c_{\max},c_{\min},w_{\max},w_{\min}$ are $\Theta(1)$, we
%obtain the followings.
\begin{itemize}
\item[R1.] In vertex cover games\footnote{As mentioned earlier, a set
cover game where each set has  size  $2$ is called a {\em vertex
cover game}, and in such games equilibria have natural connections
to vertex covers in the graph induced by the sets (i.e.\ edges).},
we show that for any advertising strategy $s^{ad}$, $$\mbox{the
dynamics of agents converges to a state of expected cost
$O(\texttt{cost}(s^{ad}))$ in PSA and LTD models.}$$

\item[R2.] In set cover games, we show that for any advertising strategy
$s^{ad}$, $$\mbox{the dynamics of agents converges to a state of
expected cost $\begin{cases}
O(\Delta_2\cdot\texttt{cost}(s^{ad})^2)&\mbox{in PSA model}\\
O(\Delta_2^2\cdot\texttt{cost}(s^{ad})^2)&\mbox{in LTD model}
\end{cases}
$,}$$where $\Delta_2$ is the maximum number of sets containing given
two agents.
\item[R3.] In set cover games, we show that for a specific advertising strategy
$s^{ad}$,
$$\mbox{the dynamics of agents converge to a state of cost $O(\texttt{cost}(s^{ad}))$ with high probability in PSA model.}$$
Moreover, we present a poly-time algorithm to find such a specific
$s^{ad}$ of low cost, i.e.\
$$\texttt{cost}(s^{ad})=O(\Delta_2\log n\cdot OPT),\qquad\mbox{where}~~OPT~\mbox{is the optimal (social) cost}.$$
%\item[R4.] For R1 and R2, we present a poly-time algorithm to find
%$s^{ad}$ with $\texttt{cost}(s^{ad})=O(OPT)$, where $OPT$ is the
%optimal social cost.
\end{itemize}

Furthermore, we emphasize that all the above convergence guarantees
happen in polynomial number of steps in terms of the number of
agents. As we mentioned earlier, without such advertising
strategies, agents can be an inefficient equilibrium state of cost
$\Omega(n)\cdot OPT$, even restricted to vertex cover games (i.e.\
$\Delta_2=1$). We also discuss extensions to the case where the
costs of agents and weights of sets are not bounded below or above
by constants.

 \iffalse We present  a poly-time algorithm to find a joint
advertising strategy in both PSA and LTD models such that players
converges to a state $s$ in poly-time, and
\begin{itemize}
\item  In PSA model,
\begin{align*}
E[\mbox{Social cost at $s$}]~=~O(OPT)\quad\qquad\qquad&\mbox{if}~ F_{\max}=2,\\
\mbox{PoA}~=~O(\Delta_2 \log n )\qquad&\mbox{if}~ F_{\max}=O(1),
\end{align*}
where we let $F_{\max}$ and $\Delta_2$ be the maximum size of sets
and the `second-order' maximum degree of the hypergraph induced by
sets, respectively.\footnote{$\Delta_2$ is the maximum number of
sets (or edges) containing given two elements (or vertices).}
\item In LTD model,
\begin{align*} \mbox{PoA}~=~O(1)\qquad\quad\qquad&\mbox{if}~
F_{\max}=2.\\
\mbox{QPoA}~=~O(\Delta_2^3 )\qquad\qquad&\mbox{if}~F_{\max}=O(1).
\end{align*}
%\item Theorem \ref{PSAoutcome-1}: For an arbitrary advertising strategy $s^{ad}$, PSA quickly converges to a state of expected cost $O(K/c_{\min})\cdot\texttt{cost}(s^{ad})^2$ for constant size sets or $O(K)\cdot\texttt{cost}(s^{ad})$ for sets of size 2.
%\item Theorem \ref{PSAoutcome-2}: For $s^{ad}$ satisfying $(\star)$, PSA quickly converges to a state with expected cost $O(1)\cdot OPT$.
%\item Theorem \ref{LTDoutcome-1}. For arbitrary joint strategy $s^{ad}$, LTD quickly converges to a state with expected cost $O(\Delta_2^2\cdot (w_{\max}+c_{\max})\cdot K/c_{\min})\cdot \texttt{cost}(s^{ad})^2$ for constant size sets or $O(\Delta_2^2\cdot (w_{\max}+c_{\max})\cdot K)\cdot \texttt{cost}(s^{ad})$ for sets of size 2.
\end{itemize}
For the precise statements of our results, we refer to Corollary
\ref{cor1}, \ref{cor2} and \ref{cor3} in Section \ref{psa} and
\ref{ltd}. We again note that without advertising, both PoA and QPoA
are $\Omega(n)$ even restricted to the case when $F_{\max}=2$ and
$\Delta_2=1$. In Section \ref{packing}, we indicate how to extend
our results to an analogous class of packing games. \fi

\paragraph{Related work.}
Achieving global coordination in distributed multi-agent systems is
a central problem of control theory with multiple real-world
applications (see \cite{Shamma} and references therein). More
specifically, several papers consider game theoretic formulations of
covering problems which are inspired by practical sensor network
problems \cite{modelpaper,diffusion,decentralizeddesign,cgl}. In particular, \cite{cgl} analyzes a game that is a specific case of the problem addressed in this paper.  However, \cite{cgl} and many other control theory papers guarantee only convergence to stable states which are
locally optimal. Since these games often have a high Price of
Anarchy \cite{PoA,book07}, the results do not translate to global
performance guarantees.

A number of approaches have been explored to
circumvent such bad PoA results. In
\cite{Sharma06stackelbergthresholds} the authors assume that the
authorities enjoy complete control over some fraction of the agents.
Similarly, \cite{Kempe03maximizingthe,Kempe05influentialnodes} focus on the problem of identifying and controlling the influential nodes of a network.  While we also use a special type of advertising for improved results in Theorem \ref{PSAoutcome-2}, we do not require particular control over certain agents.  Rather, the models we use from \cite{BBM, bbm-ics-10} incorporate strategic behavior for all agents.   Another line of
research offers stronger performance guarantees using specific learning algorithms
that employ equilibrium
selection \cite{paperstoc09,BookCL,BookGTL} or cyclic
behavior \cite{paperics11}.
Unfortunately, these techniques do not yield guarantees of fast convergence to good states in our class of games.

Our analysis builds on the works of \cite{BBM, bbm-ics-10}, in which
authors propose game theoretic models of distributed and
semi-selfish social behavior in a general repeated game setting. The
models share the common feature that a central authority has
knowledge of some joint strategy profile with low social cost, and
this authority broadcasts this strategy in the hopes that players
will adopt their prescribed strategies. These papers provide quality guarantees for particular
games, including fair cost-sharing and party affiliation games.
By using these models, we do not have to make the hard choice between enforcing top-down solutions
(which may be infeasible in both engineering systems and social settings) and poor performance guarantees.  Instead, we show that for a broad class of covering and packing problems,
incorporating mild influence from a weak central authority guides the system into a near-optimal state when
agents are only optimizing locally.

\iffalse
\section*{Recent lit review}
\begin{itemize}
\item Selfish set covering, Oberlin undergrad paper: Motivation simply describes NE, PoA, PoS, etc.  No practical applications. Same set cover game, but utility functions are from fair cost sharing perspective.  Doesn't seem to have much to say that would help us.
\item Vehicle Target Assignment, Marden-Shamma: design local utility functions that are in alignment with global utility function, define negotiation mechanisms ie vehicle communication/strategy adjustment models, prove convergence to near-optimality
\item Payoff-based learning algorithms, Marden et all: payoff-based coordination algorithms using learning-in-games framework for weakly acyclic games (incl congestion games, distributed routing, etc); converge to NE, and can design incentives so that at least one NE is a global optimum, but seemingly can't ensure this is the one they reach
\item Joint strategy fictitious play with inertia for potential games, Marden-Arslan-Shamma: show convergence to (arbitrary) pure NE for potential games with congestion/traffic routing as a special case; review wel ol known decision making models and discuss limitations.
\end{itemize}
\fi

\section{Preliminaries} % 2 GAME
\label{setup}
%In this section, we present some notation and definitions for games in general and the particular set cover game that is the focus of Sections \ref{psa} and \ref{ltd}.  We also illustrate the high price of anarchy in this game.
\subsection{Background on General Games}\label{general}
We represent a general game as a triple $\mathcal
G=\ang{N,(S_i),(\texttt{cost}_i)}$, where $N$ is a set of $n$
players, $S_i$ is the finite action space of player $i\in N$, and
$\texttt{cost}_i$ denotes the cost function of player $i$. The joint
action space of the players is $S = S_1 \times \dots \times S_n$.
For a joint action $s \in S$, we denote by $s_{-i}$ the actions of
all players $j \neq i$. Players' cost functions map joint actions to
non-negative real numbers, i.e.\ $\texttt{cost}_i : S \to
\mathbb{R}^+$ for all $i\in N$. In this paper, we define a social
cost function, $\texttt{cost} : S \to \mathbb{R}$, simply as the
summation of individual players' costs. The optimal social cost is
denoted by $$OPT=\min_{s\in S} \texttt{cost}(s).$$

Given a joint action $s$, the {\em best response} of player $i$ is
the set of actions that minimizes player $i$'s cost subject to the
other players' fixed actions $s_{-i}$, i.e. $$BR_i(s_{-i}) =
{\arg\min}_{a\in S_i} \texttt{cost}_i(a, s_{-i}).$$ {\em Best
response dynamics} is a process in which at each time step, an
arbitrary player not already playing best response updates his
action to one in his current best response set. A joint action $s
\in S$ is a {\em pure Nash equilibrium} if no player $i \in N$ can
benefit from deviating to another action, namely, $s_i \in
BR_i(s_{-i})$ for every $i \in N$.

A game $\mathcal{G}$ is called an {\em exact potential game}
\cite{ms96} if there exists a potential function $\Phi:S\to
\mathbb{R}$ such that $$\texttt{cost}_i(a',s_{-i})
-\texttt{cost}_i(a,s_{-i})= \Phi(a',s_{-i})-\Phi(a,s_{-i}),$$ for
all $i\in N$, $s_{-i}\in S_{-i}$, and $a,a'\in S_i$. For general
potential games, only the signs of both sides of these equations
must be equal. While general games are not guaranteed to have a pure
Nash equilibrium, all finite potential games do and furthermore best
response dynamics in such games always converges to a pure Nash
equilibrium \cite{ms96,book07}. However, the convergence time can be
exponentially large in terms of the number of players in general.

Two well known concepts for quantifying the inefficiency of
equilibria relative to non-equilibria are  {\em Price of Anarchy}
and {\em Price of Stability}. For $\mathcal{N}(\mathcal{G})$ the set
of pure Nash equilibria of game $\mathcal{G}$, Price of Anarchy
(PoA) and Price of Stability (PoS) are defined as
$$\mbox{PoA}=\max_{s\in \mathcal{N}(\mathcal{G})}
\frac{\texttt{cost}(s)}{OPT}\qquad\qquad\qquad \mbox{PoS}=\min_{s\in
\mathcal{N}(\mathcal{G})} \frac{\texttt{cost}(s)}{OPT}.$$
%For a class of games, the PoA and PoS are the maxima over all games in the class.

\subsection{Covering Game}\label{sec:defsng}
%\subsection{Sensor Network Game}\label{sec:defsng}

Given agents $[n]=\{1,2,\dots,n\}$, a collection of sets
$\F\subseteq 2^{[n]}$, costs $c_i$ for $i\in[n]$, and weights
$w_\sigma$ for $ \sigma\in\F$, we describe the covering game
$\mathcal G=\ang{[n],(S_i),(\texttt{cost}_i)}$ where actions
$S_i=\{\on, \off\,\}$ and cost $\texttt{cost}_i$ as defined in
\eqref{costeqn} for every agent $i\in[n]$. We let $\Delta_k$ be the
`$k$-th order' maximum degree of the hypergraph induced by sets.
Namely,
$$\Delta_k=\Delta_k(\mathcal G):=
\max_{i_1,\dots,i_k\in [n]}\left|\{ \sigma \in \F :
\{i_1,\dots,i_k\}\subset\sigma,i_i\neq i_j,\forall i\neq
j\}\right|.$$ In addition, we define
\begin{equation*}
c_{\max}:=\max_{i\in[n]} c_i\qquad c_{\min}:=\min_{i\in[n]}
c_i\qquad w_{\max}:=\max_{\sigma\in\F} w_\sigma\qquad
w_{\min}:=\min_{\sigma\in\F} w_\sigma.
\end{equation*}

Before defining the cost functions, we introduce some notation. We
say a set $\sigma\in \F$ is `covered' in joint strategy $s$ if
$s_i=\on$ for some $i\in\sigma$. Otherwise, $\sigma$ is said to be
`uncovered'. Denote the collection of sets that include agent $i$
and are uncovered in $s$ with $\mathcal F_i^{u}(s)$, or simply
$\F_i^u$ when $s$ is clear from context.  The entire set of
uncovered sets is written $\F^u=\bigcup_{i\in[n]}\F_i^u$.  For
$\sigma\subseteq[n]$, define $c(\sigma):=\sum_{i\in\sigma} c_i$, and
for $\F'\subseteq\F$, define $w(\F'):=\sum_{\sigma\in\F'}w_\sigma$.
Now define the cost function of agent $i$ is defined with respect to
any joint strategy $s\in S$ as follows:
\begin{equation}\texttt{cost}_i(s)=
\begin{cases}
\quad c_i &\mbox{if}~ s_i=\mbox{\on}\vspace{0.05in}\\
w(\F_i^u) &\mbox{if}~ s_i=\mbox{\off}\vspace{0.05in}.
\end{cases}\label{costeqn}\end{equation}
Observe that $c_i$ expresses how much agent $i$ prefers to cover the
sets containing $i$. For example, if $c_{\max}<w_{min}$, then each
agent prefers to avoid the situation that there exists an uncovered
sets containing her. As we explain in Section \ref{packing}, these
covering games can be interpreted as equivalent packing games.

For a joint action (or strategy profile) $s\in S$, let
$\mbox{ON}(s)$ and $\mbox{OFF}(s)$ be sets of nodes that are \on\
and \off, respectively. It is easy to check that the social cost
has the following simple form:
\begin{equation}\label{eq:cost}
\texttt{cost}(s)=\sum_{i\in [n]}\texttt{cost}_i(s)
=c(\mbox{ON}(s))+\sum_{\sigma\in \mathcal F^u(s)} |\sigma|\cdot
w(\sigma).
\end{equation}

\paragraph{Best response convergence.} Recall that best response dynamics converge to pure Nash equilibria for potential games.  Now observe that the covering game is an exact potential game with potential function
\begin{equation}\Phi(s)=c\left(\mbox{ON}(s)\right)+w(\F^u(s)).\label{eq:phi}\end{equation} Combining this observation with the social cost formula implies that for any $s\in S$ we have
\begin{equation}\Phi(s)\leq\texttt{cost}(s)\leq F_{\max}\cdot\Phi(s),\label{eq:phiandcost}\end{equation}
where we let $F_{\max}$ be the size of the largest set i.e.\
$F_{\max}=\max_{\sigma\in \F} |\sigma|$. %In this paper, we will
%primarily focus on the case $F_{\max}=O(1)$.

\paragraph{Optimization and equilibrium quality.}
The star graph example from the introduction reveals that PoA in the
covering game can be very large. More generally, certain covering
game instances exist with PoA $\Omega\left({n}\right)$ even
restricted to the simple case $\Delta_2=1$.\footnote{For example,
let $c_i=c$ for all $i\in[n]$ and let $w_\sigma=1$ for all
$\sigma\in\F$.  Label an arbitrary set of $\ceil{c}$ elements $L$,
and label the other elements $R$. Define $\F$ to be all sets with
one element in $L$ and one in $R$.  It is straightforward to check
that the solution with all $L$ \on\ and all $R$ \off\ is a Nash
equilibrium with cost $c\cdot \ceil{c}$, while the solution with all
$L$ \off\ and all $R$ \on\ is a Nash equilibrium with cost
$c\cdot(n-\ceil{c})$.} This motivates the need for efficient
dynamics with better guarantees than convergence to arbitrary
equilibria.

As a step in that direction, here we provide a centralized
LP-rounding-based poly-time algorithm to find a low-cost
configuration $s^{ad}$ for the covering game as follows.
\begin{itemize}
\item[1.] Solve the following Linear Programming (LP), and obtain
the solution $x^*$.
\begin{equation}
\texttt{minimize}~~\sum_{i=1}^n c_i\cdot x_i\qquad\texttt{subject
to}\qquad \sum_{i\in \sigma} x_i\geq 1~~\forall \sigma\in \mathcal
F,~x_i\in [0,1]\label{ILP}
\end{equation}
\item[2.] Set $\begin{cases}s^{ad}_i=\mbox{\on}&\mbox{if}~  x^*_i\geq 1/F_{\max}\\
s^{ad}_i=\mbox{\off}&\mbox{otherwise}\end{cases}.$
\end{itemize}
The following lemma proves that the algorithm is a $F_{\max}
\ceil{c_{\max}/w_{\min}}$-approximation one for minimizing the
social cost \eqref{eq:cost}.
\begin{lemma}
The configuration $s^{ad}$ obtained from the algorithm has
$$\texttt{cost}(s^{ad})~\leq~F_{\max}
\ceil{c_{\max}/w_{\min}}\cdot OPT,$$ where we recall that
$OPT=\min_s \texttt{cost}(s)$.
\end{lemma}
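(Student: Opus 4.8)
The plan is to insert the optimal value $L^*:=\sum_{i=1}^n c_i x^*_i$ of the LP \eqref{ILP} between the two sides of the claimed inequality: I will show $\texttt{cost}(s^{ad})\le F_{\max}\cdot L^*$ and $L^*\le \ceil{c_{\max}/w_{\min}}\cdot OPT$, and then chain these.

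For the first bound, I would first note that $s^{ad}$ leaves \emph{no} set uncovered. Fix $\sigma\in\F$: feasibility of $x^*$ gives $\sum_{i\in\sigma}x^*_i\ge 1$, and since $|\sigma|\le F_{\max}$ the values $x^*_i$ for $i\in\sigma$ cannot all be strictly below $1/F_{\max}$; hence some $i\in\sigma$ has $x^*_i\ge 1/F_{\max}$ and is turned \on\ by the rounding rule, so $\sigma$ is covered. Thus $\F^u(s^{ad})=\emptyset$ and, by \eqref{eq:cost}, $\texttt{cost}(s^{ad})=c(\text{ON}(s^{ad}))=\sum_{i:\,x^*_i\ge 1/F_{\max}}c_i$. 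Each index in this sum satisfies $F_{\max}x^*_i\ge 1$, so $c_i\le F_{\max}c_ix^*_i$; summing over all $i\in[n]$ (the extra indices only help) gives $\texttt{cost}(s^{ad})\le F_{\max}\sum_{i=1}^n c_ix^*_i=F_{\max}L^*$.

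For the second bound it suffices to produce one \emph{integral} feasible point of \eqref{ILP} of objective value at most $\ceil{c_{\max}/w_{\min}}\cdot OPT$, since $L^*$ is the LP minimum. Let $\hat s$ attain $OPT$ (the action space is finite), set $A:=\text{ON}(\hat s)$ and $U:=\F^u(\hat s)$, and choose for each $\sigma\in U$ some member $j_\sigma\in\sigma$ (every set is nonempty). Let $x$ be the indicator vector of $B:=A\cup\{j_\sigma:\sigma\in U\}$. Then $x$ is LP-feasible: a set not in $U$ already meets $A$, and a set $\sigma\in U$ contains $j_\sigma\in B$. Its objective value is $\sum_{i\in B}c_i\le c(A)+|U|\,c_{\max}$. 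On the other hand, \eqref{eq:cost} gives $OPT=c(A)+\sum_{\sigma\in U}|\sigma|\,w(\sigma)\ge c(A)+|U|\,w_{\min}$, since $|\sigma|\ge 1$ and $w(\sigma)\ge w_{\min}$ for every $\sigma\in U$. Using $\ceil{c_{\max}/w_{\min}}\ge 1$ on the first term and $c_{\max}\le\ceil{c_{\max}/w_{\min}}\cdot w_{\min}$ on the second, we get $L^*\le c(A)+|U|\,c_{\max}\le \ceil{c_{\max}/w_{\min}}\big(c(A)+|U|\,w_{\min}\big)\le\ceil{c_{\max}/w_{\min}}\cdot OPT$. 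Combining with the first bound proves the lemma.

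The one conceptual point — the only place anything nonobvious happens — is that \eqref{ILP} is a relaxation of \emph{total} set cover, whereas $OPT$ is allowed to leave sets uncovered; the repair argument above handles this, since covering an uncovered $\sigma$ costs at most $c_{\max}$ while $\sigma$ already charges at least $w_{\min}$ to $OPT$, a blow-up of at most $c_{\max}/w_{\min}\le\ceil{c_{\max}/w_{\min}}$. The rest — the averaging argument for coverage of $s^{ad}$ and the two termwise comparisons — is routine bookkeeping.
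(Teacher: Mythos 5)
Your proof is correct and follows essentially the same route as the paper's: both bound $\texttt{cost}(s^{ad})$ by $F_{\max}$ times the LP optimum (using that the rounding produces a full cover), and both compare the LP optimum to $OPT$ by repairing the optimal configuration into a full cover at a multiplicative cost of $\ceil{c_{\max}/w_{\min}}$. You simply spell out the coverage and feasibility details that the paper leaves implicit.
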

\begin{proof}
Let $s^*$ be the optimal configuration i.e.\
$\texttt{cost}(s^*)=OPT$. If there exists a uncovered set $\sigma$
under the configuration $s^*$, choose one element from $\sigma$ and
force it to be turned \on. Repeat this procedure until all sets are
covered, and say $s^{\dag}$ be the resulting configuration. Now
observe that $\texttt{cost}(s^\dag)\leq
\ceil{c_{\max}/w_{\min}}\cdot OPT$. Therefore, it follows that
$$\texttt{cost}(s^{ad})~\leq~F_{\max}\cdot\sum_i c_i\cdot x_i~\leq~F_{\max}\cdot
\texttt{cost}(s^\dag)~\leq~ F_{\max} \ceil{c_{\max}/w_{\min}}\cdot
OPT.$$
\end{proof}
Under the assumption \eqref{eq:assume}, this is an
$O(F_{\max})\cdot OPT$-approximation algorithm to the optimal social
cost.

%\LOOKr{Remove this paragraph?} Consider any set cover problem with $c_{\max}<w_{\min}$ and its corresponding game.  Any cover that is minimal in the sense that no element can be excluded without leaving a set uncovered corresponds to the set of \on\ agents in a Nash equilibrium of the game. The costs of such covers and their corresponding equilibria are equal, so in particular, the lowest-cost Nash equilibrium gives the solution to the optimization problem.

\iffalse \LOOKr{Remove?} We now show that there is a large gap between the PoS and PoA of many covering games, motivating the need for efficient dynamics with better guarantees than convergence to arbitrary equilibria.  Observe that the global minimizer of the potential function is a Nash equilibrium, and let $s^*$ be this joint strategy.  Then
$$\text{PoS}=\frac{\min_{s\in\mathcal N(\mathcal G)}\texttt{cost}(s)}{\min_{s\in S}\texttt{cost}(s)}\leq \frac{\texttt{cost}(s^*)}{\Phi(s^*)}\leq F_{\max}$$ \fi

\section{Public Service Advertising} % 3 PSA
\label{psa} In this section and the following one, we show that
price of anarchy is avoidable in covering games even using best
response-inspired dynamics as long as these dynamics incorporate
some form of suggestion from a weak central authority that is aware
of a high quality equilibrium.  %Recall from the introduction that we
%use $K=\Delta_2\cdot\ceil{c_{\max}/w_{\min}}\cdot c_{\max}/c_{\min}$
%throughout Sections \ref{psa} and \ref{ltd}.
%\subsection{Dynamics Model}

The first model we study in this paper is the public service
advertising (PSA) model in \cite{BBM} in which a central authority
broadcasts a strategy for each agent, which some agents receive and
temporarily follow.  Player behavior is described in two phases:
\begin{itemize}
\item[{\bf 1:}] Play begins in an arbitrary state, and a central authority advertises joint action $s^{ad}\in S$.  Each agent receives the proposed strategy independently with probability $\alpha\in(0,1)$.  Agents that receive this signal are called receptive.  Receptive agents play their advertising strategies throughout Phase 1, and non-receptive agents undergo best response dynamics to settle on a joint strategy that is a Nash equilibrium given the fixed behavior of receptive agents.  We call this joint strategy $s'$.
\item[{\bf 2:}] All agents participate in best response dynamics until convergence to some Nash equilibrium $s''$.
\end{itemize}
Since our covering game is a potential game and all potential games
eventually converge to a Nash equilibrium under best response
dynamics, both phases are guaranteed to terminate.  Furthermore,
convergence occurs in poly-time with respect to parameters $\{n,
c_1,\dots,c_n, w(\sigma):\sigma\in \mathcal F\}$.\footnote{This is
because $\Phi$ is bounded above and below by functions of these
parameters and decreases under best response dynamics.}
\subsection{Effect of Advertising in PSA}

%{\tt Edit}
In this section we show that advertising helps significantly in
covering games. In particular, we show that if the advertising
strategy $s^{ad}$ has low social cost, then the cost of the
resulting equilibrium is low even if only a small constant $\alpha$
fraction of the agents receive and respond to the signal. Theorem
\ref{PSAoutcome-1} formalizes the general result of this section,
and Theorem \ref{PSAoutcome-2} improves this result for particular advertising strategies. %\footnote{Notice that all coefficients are constant if $c_{\max},c_{\min},w_{\max},w_{\min}$ are $O(1)$.}
For the convenience, in this section we assume costs and weights are
bounded above and below, i.e.\
\begin{equation}\label{eq:assume}
c_{\max}:=\max_{i\in[n]} c_i=O(1)\quad c_{\min}:=\min_{i\in[n]}
c_i=\omega(1)\quad w_{\max}:=\max_{\sigma\in\F} w_\sigma=O(1)\quad
w_{\min}:=\min_{\sigma\in\F} w_\sigma=\omega(1).
\end{equation}

\begin{theorem}\label{PSAoutcome-1}
For any advertising strategy $s^{ad}$ in the PSA model,
\begin{equation}
\E[\texttt{cost}(s'')]\leq
\begin{cases}
\displaystyle O(1)\cdot\texttt{cost}(s^{ad}) &\mbox{if}~F_{\max}=2\\
\displaystyle O(\Delta_2)\cdot\texttt{cost}(s^{ad})^2
&\mbox{if}~F_{\max}=O(1)
\end{cases}. \end{equation}
%where $K:=\Delta_2\cdot\ceil{c_{\max}/w_{\min}}\cdot
%c_{\max}/c_{\min}$.\footnote{Note that $\Delta_2=1$ if
%$F_{\max}=2$.}
% and $\Delta_2=1$ when $F_{\max}=2$.
%where $\Delta=\Delta_2\cdot \ceil{\frac {c_{\max}} {w_{\min}} }\cdot \frac{c_{\max}}{c_{\min}}$ and $\Delta_2=\max_{i,j\in[n]}|\{\sigma:i,j\in\sigma\}|$. \footnote{Note $\Delta=O(1)$ when costs and weights are $O(1)$ and each pair of elements is in $O(1)$ sets.}
\end{theorem}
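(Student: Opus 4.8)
Since the covering game is an exact potential game with $\Phi$ as in \eqref{eq:phi}, Phase 2 is best-response dynamics started from $s'$, so $\Phi(s'')\le\Phi(s')$; combining with \eqref{eq:phiandcost} gives $\texttt{cost}(s'')\le F_{\max}\Phi(s'')\le F_{\max}\Phi(s')\le F_{\max}\,\texttt{cost}(s')$. As $F_{\max}=O(1)$ in both cases, it suffices to bound $\E[\texttt{cost}(s')]$. I will bound $\texttt{cost}(s')$ via the Nash condition of $s'$ in the restricted game: receptive agents play $s^{ad}$ (equality), and each non-receptive $i$ is at best response at $s'$, so $\texttt{cost}_i(s')\le\texttt{cost}_i(s^{ad}_i,s'_{-i})$. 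Summing over all $i$ gives $\texttt{cost}(s')\le c(\mbox{ON}(s^{ad}))+\sum_{i\in\mbox{\scriptsize OFF}(s^{ad})}w(\mathcal F_i^u(\off,s'_{-i}))$.

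\textbf{Step 1 (rearranging the deviation terms).} I would expand $\sum_{i\in\mbox{\scriptsize OFF}(s^{ad})}w(\mathcal F_i^u(\off,s'_{-i}))$ by counting, for each $\sigma\in\mathcal F$, how many $i\in\sigma\cap\mbox{OFF}(s^{ad})$ would see $\sigma$ uncovered upon switching to $\off$. If $\sigma$ has $\ge2$ on-agents in $s'$ the count is $0$; if $\sigma$ is uncovered in $s'$ the count is $\le F_{\max}$; if $\sigma$ has exactly one on-agent $k$ in $s'$, the count is nonzero only when $k\in\mbox{OFF}(s^{ad})$ — and then necessarily $k$ is non-receptive (a receptive on-agent would lie in $\mbox{ON}(s^{ad})$). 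Letting $T:=\{k:\ k$ non-receptive, $k\in\mbox{OFF}(s^{ad})$, $k\in\mbox{ON}(s')\}$ and $D_k:=\{\sigma:\ \sigma\cap\mbox{ON}(s')=\{k\}\}$ (pairwise disjoint across $k$), this yields $\texttt{cost}(s')\le c(\mbox{ON}(s^{ad}))+F_{\max}\,w(\mathcal F^u(s'))+\sum_{k\in T}w(D_k)$.

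\textbf{Step 2 (the deterministic pieces).} Using the boundedness assumption \eqref{eq:assume}: $c(\mbox{ON}(s^{ad}))\le\texttt{cost}(s^{ad})$ and $|\mbox{ON}(s^{ad})|=O(\texttt{cost}(s^{ad}))$. For $w(\mathcal F^u(s'))$: a set uncovered in $s'$ is either uncovered in $s^{ad}$ — total weight $\le w(\mathcal F^u(s^{ad}))\le\texttt{cost}(s^{ad})$ — or it contains an agent $a\in\mbox{ON}(s^{ad})$ that is off in $s'$; each such $a$ is at best response at $s'$, so its incident uncovered weight is $\le c_{\max}$, and there are $\le|\mbox{ON}(s^{ad})|=O(\texttt{cost}(s^{ad}))$ such agents. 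Hence $w(\mathcal F^u(s'))=O(\texttt{cost}(s^{ad}))$ \emph{deterministically}. Likewise, the portion of $\bigcup_{k\in T}D_k$ lying in $\mathcal F^u(s^{ad})$ has weight $\le\texttt{cost}(s^{ad})$. So everything reduces to bounding $\E\big[w(U_2)\big]$, where $U_2:=\bigcup_{k\in T}\big(D_k\setminus\mathcal F^u(s^{ad})\big)$ — the sets that were already covered in $s^{ad}$ but in $s'$ are covered \emph{only} by a turned-on non-receptive agent.

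\textbf{Step 3 (the probabilistic crux — main obstacle).} The essential observation is: if $\sigma\in U_2$ then every agent of $\sigma\cap\mbox{ON}(s^{ad})$ is off in $s'$, hence non-receptive. For $F_{\max}=2$ such a $\sigma$ is an edge $\{k,a\}$ with $a\in\mbox{ON}(s^{ad})$ off in $s'$; the delicate estimate is that $\Pr[a\in\mbox{ON}(s^{ad})\text{ ends up off in }s']$ is small — $a$ going off essentially forces all of $a$'s incident sets to be covered by others, which forces many neighbours of $a$ to be non-receptive and to turn on, an unlikely conjunction — and charging each such event against $a$'s own share of $\texttt{cost}(s^{ad})$ (either $c_a$, or $a$'s incident uncovered-in-$s^{ad}$ weight) gives $\E[w(U_2)]=O(\texttt{cost}(s^{ad}))$, hence the $O(1)\cdot\texttt{cost}(s^{ad})$ bound. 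For general $F_{\max}=O(1)$ this charging cannot be pushed as far; instead I would charge each $\sigma\in U_2$ to a pair $(k,a)$ with $\{k,a\}\subseteq\sigma$, $k\in T$, $a\in\mbox{ON}(s^{ad})\cap\mbox{OFF}(s')$, noting at most $\Delta_2$ sets per pair, bounding $|T|$ and $|\mbox{ON}(s^{ad})|$ each by $O(\texttt{cost}(s^{ad}))$ via \eqref{eq:assume} (e.g.\ $|T|\,c_{\min}\le\sum_{k\in T}c_k\le\sum_{k\in T}w(D_k)$ together with the estimate being built), and folding in the probabilistic fact that both $k$ turning on and $a$ turning off require coincident non-receptivity; this produces the $O(\Delta_2\cdot\texttt{cost}(s^{ad})^2)$ bound. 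The hard part throughout is precisely this: deterministically $w(U_2)$ (equivalently $|T|$) can be $\Omega(n)$ — witness a star-like gadget where one advertised-on agent turns off and many advertised-off agents turn on to cover its sets — so one must analyse the Phase-1 best-response cascade and the associated receptivity events carefully to see that such configurations are exponentially unlikely; this is also exactly where the gap between the $F_{\max}=2$ and the general-$F_{\max}$ bounds comes from.
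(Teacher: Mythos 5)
Your reduction to Phase 1 via the potential function, and your Steps 1--2, are correct and essentially reproduce the paper's decomposition: your bound on $w(\mathcal F^u(s'))$ via best-responding agents of $\mbox{ON}(s^{ad})$ that are \off\ in $s'$ is exactly Lemma \ref{PSAEf-1}, and your residual quantity $\sum_{k\in T}w(D_k)$ is the paper's $|R_\on|$/$\F_\off$ term. The problem is that Step 3 --- the probabilistic core, corresponding to Lemma \ref{PSARf-1} and Proposition \ref{tech} --- is a plan rather than an argument, and the plan as stated has two concrete defects. First, for $F_{\max}=2$ the claim you need is \emph{not} that $\Pr[a\text{ is \off\ in }s']$ is small (it need not be: an $a\in\mbox{ON}(s^{ad})$ with few incident sets can go \off\ with constant probability); what is needed, and what the paper proves, is that $|\F^*_a|\cdot\Pr[a\text{ is \off}]=O(\Delta_2)$ \emph{uniformly} in the degree. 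This rests on three steps you never supply: (i) if $a$ best-responds to \off, at most $\lfloor c_{\max}/w_{\min}\rfloor$ of the sets containing $a$ as the unique $\mbox{ON}(s^{ad})$-element can have an all-receptive $R$-part (receptive $R$-agents stay \off); (ii) one must pass to a subfamily $\widehat\F^*_a$ with pairwise disjoint $R$-parts (losing a factor $(F_{\max}-1)\Delta_2$) to make these events independent; (iii) the resulting binomial tail, multiplied by the degree, is $O(1)$ --- this is Proposition \ref{tech} and is not automatic.

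Second, your general-$F_{\max}$ charging is circular. You charge each $\sigma\in U_2$ to a pair $(k,a)$ with $k\in T$ and propose to bound $|T|=O(\texttt{cost}(s^{ad}))$ using $|T|\,c_{\min}\le\sum_{k\in T}w(D_k)$; but $\sum_{k\in T}w(D_k)\le w(\F^u(s^{ad}))+w(U_2)$ and $w(U_2)=O(\Delta_2|T|\,|L|)$, so the inequality you obtain is $|T|(c_{\min}-O(\Delta_2|L|))\le O(\texttt{cost}(s^{ad}))$, which is vacuous whenever $\Delta_2|L|$ exceeds a constant. Indeed $|T|=O(\texttt{cost}(s^{ad}))$ is not established anywhere (the theorem itself only yields $O(\Delta_2)\cdot\texttt{cost}(s^{ad})^2$). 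The paper avoids this by splitting differently: sets of $\F_\off$ containing \emph{two or more} elements of $L$ are charged deterministically to pairs inside $L$, giving the $\Delta_2|L|^2$ term that is the actual source of the quadratic bound, while sets with a unique $L$-element get the probabilistic treatment above. You would need to adopt that split (or something equivalent) for your Step 3 to go through.
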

%Note that coefficients are $O(1)$ when pairs of elements share at most a constant number of sets and all costs and weights are constant.
Theorem \ref{PSAoutcome-1} implies that if $s^{ad}$ is obtained from
the $O(F_{\max})$-approximation poly-time algorithm described in
Section \ref{sec:defsng}, the following corollary holds.
\begin{corollary}\label{cor1}
There exists a poly-time algorithm to find an advertising strategy
$s^{ad}$ for the PSA model such that
%this theorem implies% convergence to a state with cost
\begin{equation} \E[\texttt{cost}(s'')]\leq \begin{cases}
O(1)\cdot OPT &\mbox{if}~F_{\max}=2\\O(\Delta_2)\cdot
OPT^2&\mbox{if}~F_{\max}=O(1)\end{cases}.\notag\end{equation}
\end{corollary}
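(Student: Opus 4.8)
Corollary~\ref{cor1} is an immediate consequence of Theorem~\ref{PSAoutcome-1}: take $s^{ad}$ to be the output of the LP-rounding algorithm of Section~\ref{sec:defsng}. By the preceding lemma and assumption~\eqref{eq:assume}, $\texttt{cost}(s^{ad}) = O(F_{\max})\cdot OPT$, which is $O(1)\cdot OPT$ whenever $F_{\max} = O(1)$; plugging this into Theorem~\ref{PSAoutcome-1} gives $\E[\texttt{cost}(s'')] = O(1)\cdot OPT$ when $F_{\max} = 2$ and $\E[\texttt{cost}(s'')] = O(\Delta_2)\cdot(O(1)\cdot OPT)^2 = O(\Delta_2)\cdot OPT^2$ when $F_{\max} = O(1)$. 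So the real content is Theorem~\ref{PSAoutcome-1}, which I sketch here. The first move is to pass from $s''$ to $s'$: since the covering game is an exact potential game with potential $\Phi$ of~\eqref{eq:phi} and Phase~2 consists only of best-response moves, $\Phi(s'') \le \Phi(s')$; combined with~\eqref{eq:phiandcost} this gives $\texttt{cost}(s'') \le F_{\max}\Phi(s'') \le F_{\max}\Phi(s')$, so since $F_{\max} = O(1)$ it suffices to bound $\E[\Phi(s')] = \E[\,c(\mathrm{ON}(s')) + w(\F^u(s'))\,]$.

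Bounding $w(\F^u(s'))$ is the easy half, and here I would argue deterministically. Split the sets uncovered in $s'$ into those already uncovered in $s^{ad}$ and those covered in $s^{ad}$. The former have total weight at most $w(\F^u(s^{ad})) \le \Phi(s^{ad}) \le \texttt{cost}(s^{ad})$. For a set $\sigma$ of the latter kind, any $k \in \sigma$ with $s^{ad}_k = \on$ must be non-receptive (a receptive such agent would keep $\sigma$ covered in $s'$) and is therefore \off\ in $s'$; its best-response condition yields $w(\F_k^u(s')) \le c_k$, and since $\sigma \in \F_k^u(s')$ we get $w_\sigma \le c_k$. Charging each such $\sigma$ to one such $k$, the sets charged to a fixed $k$ have total weight at most $w(\F_k^u(s')) \le c_k$, so summing over $k$ the latter kind has total weight at most $c(\mathrm{ON}(s^{ad})) \le \texttt{cost}(s^{ad})$. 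Hence $w(\F^u(s')) \le 2\,\texttt{cost}(s^{ad})$ always.

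The hard half is $c(\mathrm{ON}(s'))$. Partition $\mathrm{ON}(s')$ into (i) receptive agents that are on in $s^{ad}$, (ii) non-receptive agents that are on in both $s'$ and $s^{ad}$, and (iii) non-receptive agents that are on in $s'$ but off in $s^{ad}$; call the last set $N$. Classes (i) and (ii) lie in $\{i : s^{ad}_i = \on\}$ and so together contribute at most $c(\mathrm{ON}(s^{ad})) \le \texttt{cost}(s^{ad})$. For $i \in N$, the best-response condition ($i$ prefers on to off) gives $c_i \le w(\mathcal D_i)$, where $\mathcal D_i = \F_i^u(\off, s'_{-i})$ is the family of sets $\sigma \ni i$ all of whose other members are \off\ in $s'$ --- the sets $i$ single-handedly covers. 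The $\mathcal D_i$, $i \in N$, are pairwise disjoint (each set in $\mathcal D_i$ has $i$ as its unique on-member in $s'$), so $c(N) \le w(\mathcal D)$ with $\mathcal D = \bigsqcup_{i \in N}\mathcal D_i$. It remains to bound $\E[w(\mathcal D)]$. Each $\sigma \in \mathcal D$ is either uncovered in $s^{ad}$ (total such weight $\le \texttt{cost}(s^{ad})$, as before) or covered in $s^{ad}$, and in the latter case --- since $i \in N$ has $s^{ad}_i = \off$ --- every $s^{ad}$-coverer of $\sigma$ lies in $\sigma \setminus \{i\}$, hence is \off\ in $s'$, hence is non-receptive. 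This is where the reception parameter $\alpha$ and the second-order degree $\Delta_2$ enter: I would bound the expected number of such ``sabotaged'' sets by counting pairs $(k,i)$ where $k \in \mathrm{ON}(s^{ad})$ defects to \off\ in $s'$ (there are $O(\texttt{cost}(s^{ad}))$ such $k$ since all costs and weights are $\Theta(1)$ under~\eqref{eq:assume}) and $i \in N$ is the single-handed coverer, using that each pair lies in at most $\Delta_2$ common sets and that the events forcing a coverer to defect have probability bounded away from $1$. Closing the loop with the analogous $O(\texttt{cost}(s^{ad}))$ bound on $|N|$ produces the $\Delta_2\cdot\texttt{cost}(s^{ad})^2$ factor for $F_{\max} = O(1)$; when $F_{\max} = 2$ every set has a single ``partner'' and the pair-counting collapses, giving the clean $O(1)\cdot\texttt{cost}(s^{ad})$.

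The step I expect to be the real obstacle is precisely this last one: controlling the weight of sets that are comfortably covered in $s^{ad}$ but only barely covered in $s'$. The naive bound --- each covered set's coverers all defect with probability at most $1-\alpha$ --- is useless, because $\sum_{\sigma\text{ covered in }s^{ad}} w_\sigma$ can be arbitrarily larger than $\texttt{cost}(s^{ad})$ (one cheap agent can cover many heavy sets). The fix is to observe that such a set only hurts us when a substitute single-handed coverer also emerges in $s'$, and to charge these joint configurations to the $O(\texttt{cost}(s^{ad}))$ defecting on-agents of $s^{ad}$ and the $\Delta_2$ set-multiplicity rather than to raw set weights, accepting the quadratic loss as the price. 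Making the probabilistic part precise --- in particular quantifying how the (deterministic) best-response dynamics of the non-receptive agents interacts with the random receptive set --- is where I would expect to spend most of the effort.
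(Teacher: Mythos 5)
Your derivation of the corollary is exactly the paper's: feed the LP-rounded $s^{ad}$ (which by the rounding lemma and assumption \eqref{eq:assume} has cost $F_{\max}\lceil c_{\max}/w_{\min}\rceil\cdot OPT=O(1)\cdot OPT$ for $F_{\max}=O(1)$) into Theorem \ref{PSAoutcome-1}. Your accompanying sketch of the theorem also mirrors the paper's proof (potential-function reduction from $s''$ to $s'$, the charging argument of Lemma \ref{PSAEf-1}, and the witness-set/$\Delta_2|L|^2$ decomposition of Lemma \ref{PSARf-1}), with the binomial-tail estimate for the singly-defended sets correctly identified as the remaining technical step.
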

\paragraph{Effective advertising.} We additionally consider advertising strategies particular to our game for improved performance of the model.  We say that advertising strategy $s^{ad}$ satisfies condition ($\star$) if
\begin{equation}
\left\lfloor\frac{c_{\max}}{w_{\min}}\right\rfloor x^{\left\lfloor
{c_{\max}} / {w_{\min}}\right\rfloor} \left(1-\alpha^{\mathcal
F_{\max}}\right)^{x-\left\lfloor {c_{\max}} /
{w_{\min}}\right\rfloor}~\leq~ \frac1{n^2},\qquad\text{ for all
}~~x\geq\frac{\Delta^*_1}{\Delta_2(F_{\max}-1)},\tag{$\star$}\end{equation}
where $\Delta^*_1:=\Delta^*_1(s^{ad})$ is the smallest number of
sets containing a given \on\ element in $s^{ad}$ as the unique \on\
element. We say $\Delta^*_1$ is the `core' minimum degree of \on\
elements in $s^{ad}$. Intuitively, the condition $(\star)$ means
that each \on\ element in the advertising strategy $s^{ad}$ `solely'
contributes a large number of sets to cover. We establish the
following stronger theorem which implies that agents will reach a
state of social cost $O(\texttt{cost}(s^{ad}))$ at the end of Phase
2 if $s^{ad}$ satisfied the condition $(\star)$.

%Theorem \ref{PSAoutcome-2} shows that if $s^{ad}$
%satisfies ($\star$), it is very likely that all of the agents \on\
%in $s^{ad}$ will be \on\ at the end of Phase 1, and this gives us a
%tighter bound on the cost of agents that are supposed to be \off\
%but are instead \on.

\begin{theorem}\label{PSAoutcome-2}
For an advertising strategy $s^{ad}$ satisfying the condition
$(\star)$ in the PSA model,
$$\texttt{cost}(s'')= O(1)\cdot\texttt{cost}(s^{ad})\qquad\mbox{with probability}~~1-\frac1n,
\qquad\mbox{if}~F_{\max}=O(1).$$
\end{theorem}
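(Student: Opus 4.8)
The plan is to show that, with probability at least $1-1/n$ over which agents are receptive, the Phase-1 equilibrium $s'$ already satisfies $\texttt{cost}(s')=O(\texttt{cost}(s^{ad}))$; the bound for $s''$ is then automatic, since $\Phi$ is non-increasing along best-response dynamics, so $\Phi(s'')\le\Phi(s')$, and by \eqref{eq:phiandcost} with $F_{\max}=O(1)$ we get $\texttt{cost}(s'')\le F_{\max}\Phi(s'')\le F_{\max}\Phi(s')\le F_{\max}\texttt{cost}(s')$. In particular no separate probabilistic argument is needed for Phase 2. The crux is therefore a single structural fact about $s'$: under an event $\mathcal E$ of probability $\ge 1-1/n$, no element that is \on\ in $s^{ad}$ turns \off\ during Phase 1, i.e.\ $\mathrm{ON}(s^{ad})\subseteq\mathrm{ON}(s')$.

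Granting this, the cost accounting for $s'$ is short. First, if $\mathrm{ON}(s^{ad})\subseteq\mathrm{ON}(s')$ then every set uncovered in $s'$ is already uncovered in $s^{ad}$, so $w(\mathcal F^u(s'))\le w(\mathcal F^u(s^{ad}))$. Second, write $\mathrm{ON}(s')=\mathrm{ON}(s^{ad})\sqcup T$; each $j\in T$ is \off\ in $s^{ad}$ and \on\ in $s'$, hence non-receptive (a receptive agent that is \off\ in $s^{ad}$ stays \off\ throughout Phase 1), so $j$ plays best response in $s'$ and thus $c_j\le w(U_j)$, where $U_j$ is the family of sets $j$ solely covers in $s'$. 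The $U_j$ are pairwise disjoint, and each $\sigma\in U_j$ must be uncovered in $s^{ad}$: otherwise $\sigma$ contains an element of $\mathrm{ON}(s^{ad})\subseteq\mathrm{ON}(s')$ other than $j$, contradicting that $j$ solely covers $\sigma$ in $s'$. Hence $\sum_{j\in T}c_j\le w\big(\bigsqcup_j U_j\big)\le w(\mathcal F^u(s^{ad}))$, which gives $\Phi(s')=c(\mathrm{ON}(s^{ad}))+\sum_{j\in T}c_j+w(\mathcal F^u(s'))\le 2\Phi(s^{ad})\le 2\,\texttt{cost}(s^{ad})$, and therefore $\texttt{cost}(s'')\le 2F_{\max}\texttt{cost}(s^{ad})=O(\texttt{cost}(s^{ad}))$.

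To establish $\mathcal E$, fix an \on\ element $i$ of $s^{ad}$ and let $\mathcal D_i$ be the collection of sets $i$ solely covers in $s^{ad}$, so $|\mathcal D_i|\ge\Delta^*_1$. Greedily---repeatedly picking a set of $\mathcal D_i$ and deleting the at most $(F_{\max}-1)\Delta_2$ members of $\mathcal D_i$ that meet it outside $i$---extract sets $\tau_1,\dots,\tau_{x_i}\in\mathcal D_i$ whose pairwise intersections are exactly $\{i\}$, with $x_i\ge\Delta^*_1/(\Delta_2(F_{\max}-1))$. Call $\tau_\ell$ \emph{good} if all of $\tau_\ell\setminus\{i\}$ are receptive; since the sets $\tau_\ell\setminus\{i\}$ are disjoint, these events are mutually independent, each of probability $\alpha^{|\tau_\ell|-1}\ge\alpha^{F_{\max}-1}$. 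A binomial-tail estimate---which the left-hand side of $(\star)$, evaluated at $x=x_i$, is designed to dominate---bounds the probability that at most $\lfloor c_{\max}/w_{\min}\rfloor$ of the $\tau_\ell$ are good by $O(1/n^2)$; a union bound over the $\le n$ elements of $\mathrm{ON}(s^{ad})$ then yields $\Pr[\mathcal E]\ge 1-1/n$, where $\mathcal E$ is the event that every \on\ element $i$ of $s^{ad}$ has more than $c_{\max}/w_{\min}$ good sets among its $\tau_\ell$'s. Finally, $\mathcal E$ forces $\mathrm{ON}(s^{ad})\subseteq\mathrm{ON}(s')$: an \on\ element of $s^{ad}$ that is receptive stays \on\ throughout Phase 1, and for a non-receptive \on\ element $i$ the non-$i$ members of its good sets are receptive and \off\ in $s^{ad}$, hence frozen \off\ in Phase 1, so were $i$ \off\ in $s'$ all those good sets would be uncovered and incident to $i$, giving $\texttt{cost}_i(\off,s'_{-i})>(c_{\max}/w_{\min})\cdot w_{\min}\ge c_i=\texttt{cost}_i(\on,s'_{-i})$ and contradicting that $i$ best-responds in $s'$.

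The main obstacle is isolating the right event $\mathcal E$: the useful guarantee is not ``$i$ is receptive'' (too weak---it fails with constant probability) nor ``most of $i$'s covered sets survive the dynamics'' (too strong to control), but the intermediate statement that $i$ solely covers $>c_{\max}/w_{\min}$ sets whose other members all happen to be receptive, which is exactly what makes $i$ unwilling to switch \off\ under myopic best response while still being cheap enough to union-bound; condition $(\star)$ is precisely the requirement that the combinatorial quantity $\Delta^*_1/(\Delta_2(F_{\max}-1))$ (essentially $\Omega(\log n)$) be large enough for this to hold simultaneously across all \on\ elements. The greedy extraction, the exact binomial bookkeeping against $(\star)$, and degenerate cases such as $\mathrm{ON}(s^{ad})=\emptyset$ are routine.
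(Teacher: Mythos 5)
Your proposal is correct and follows essentially the same route as the paper's: you prove that with probability $1-1/n$ no agent in $\mathrm{ON}(s^{ad})$ ends Phase~1 \off\ (via the same greedy extraction of $R$-disjoint solely-covered sets, independence of receptiveness, and the binomial tail that condition $(\star)$ dominates, followed by a union bound), then deduce $\F_{bad}=\emptyset$ and $c(R_\on)\le w(\F_R)$ by disjointness of the families each best-responding $r\in R_\on$ solely covers, and handle Phase~2 with the potential-function sandwich \eqref{eq:phiandcost}. This matches the paper's Lemma~\ref{PSARf-2} and its use almost step for step.
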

The following corollary implies that it is possible to find such  an
advertising strategy $s^{ad}$ of low cost.
\begin{corollary}\label{cor2}
There exists a poly-time algorithm to find an advertising strategy
$s^{ad}$ for the PSA model such that
$$\texttt{cost}(s'')= O(\Delta_2\log n)\cdot OPT\qquad\mbox{with probability}~~1-\frac1n,
\qquad\mbox{if}~F_{\max}=O(1).$$
\end{corollary}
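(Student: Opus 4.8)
The plan is to reduce the corollary to Theorem~\ref{PSAoutcome-2}: it is enough to give a polynomial-time algorithm that outputs an advertising strategy $s^{ad}$ which (a) satisfies condition $(\star)$ and (b) has $\texttt{cost}(s^{ad})=O(\Delta_2\log n)\cdot OPT$. Indeed, for such an $s^{ad}$ Theorem~\ref{PSAoutcome-2} (using $F_{\max}=O(1)$) gives $\texttt{cost}(s'')=O(1)\cdot\texttt{cost}(s^{ad})=O(\Delta_2\log n)\cdot OPT$ with probability $1-\tfrac1n$, which is exactly the claim.

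The first step is to rewrite $(\star)$ as a clean lower bound on $\Delta^*_1(s^{ad})$. Set $k:=\floor{c_{\max}/w_{\min}}$, a constant by~\eqref{eq:assume}. The left-hand side of $(\star)$ is $k\,x^{k}(1-\alpha^{F_{\max}})^{x-k}$, and since $1-\alpha^{F_{\max}}\in(0,1)$ is a constant, a routine estimate (the exponential factor beats the polynomial $x^{k}$, and the function is eventually decreasing in $x$) produces a constant $c_0=c_0(\alpha,F_{\max},c_{\max}/w_{\min})$ with $k\,x^{k}(1-\alpha^{F_{\max}})^{x-k}\le n^{-2}$ for all $x\ge c_0\log n$; when $k=0$ the left-hand side is $0$ and $(\star)$ is vacuous. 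Hence $(\star)$ holds as soon as $\Delta^*_1(s^{ad})\ge T$, where $T:=c_0\,\Delta_2(F_{\max}-1)\log n=\Theta(\Delta_2\log n)$.

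The construction of $s^{ad}$ then has two phases. First, solve the LP~\eqref{ILP} to get $x^*$ and apply the rounding rule of Section~\ref{sec:defsng} (turn $i$ \on\ iff $x^*_i\ge 1/F_{\max}$), obtaining a full cover $C_0$; as in the analysis of that algorithm its cost is at most $F_{\max}\ceil{c_{\max}/w_{\min}}\cdot OPT=O(OPT)$, so also $|C_0|\le c(C_0)/c_{\min}=O(OPT)$ by~\eqref{eq:assume}. Second, prune: repeatedly, while some \on\ agent is the unique \on\ agent of fewer than $T$ sets, turn that agent \off; output the final configuration as $s^{ad}$.

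The verification has three routine pieces. (i) Termination and $(\star)$: turning an agent \off\ never decreases the number of sets any remaining \on\ agent solo-covers (a set previously covered by exactly $\{i,j\}$ becomes solo-covered by $j$), so when pruning halts every \on\ agent solo-covers at least $T$ sets, i.e.\ $\Delta^*_1(s^{ad})\ge T$ and $(\star)$ holds; the process stops within $n$ steps because the number of \on\ agents strictly decreases. (ii) Cost: each pruning step removes an agent solo-covering fewer than $T$ sets, and a set becomes newly uncovered only when its unique \on\ agent is removed, after which it stays uncovered, so $|\F^u(s^{ad})|\le|C_0|\cdot T=O(\Delta_2\log n\cdot OPT)$; since each uncovered set costs $|\sigma|w_\sigma\le F_{\max}w_{\max}=O(1)$ and the \on-cost is at most $c(C_0)=O(OPT)$, formula~\eqref{eq:cost} gives $\texttt{cost}(s^{ad})=O(\Delta_2\log n\cdot OPT)$. (iii) Running time: solving the LP plus at most $n$ pruning steps, each just recounting solo-coverages, is polynomial. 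The only genuine difficulty is the design constraint $(\star)$ itself, which forces every advertised \on\ agent to be individually responsible for $\Omega(\Delta_2\log n)$ sets; the pruning step is precisely the device that buys this ``core degree'' property, at the cost of an $O(\Delta_2\log n)$ blow-up coming from the sets it leaves uncovered.
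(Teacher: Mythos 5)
Your proposal is correct and follows essentially the same route as the paper: observe that $(\star)$ reduces to the lower bound $\Delta^*_1(s^{ad})=\Omega(\Delta_2\log n)$, start from the $O(1)\cdot OPT$-cost LP-rounded cover of Section~\ref{sec:defsng}, greedily turn \off\ every agent that is the unique \on\ element of too few sets, and invoke Theorem~\ref{PSAoutcome-2}. The only difference is that you spell out the details the paper leaves implicit (the monotonicity of solo-coverage under pruning and the $|C_0|\cdot T$ accounting for the newly uncovered sets), which is a welcome elaboration rather than a deviation.
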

\begin{proof}
Here we explain how to find an advertising strategy $s^{ad}$
satisfying the condition $(\star)$ as well as being of low cost.
Observe that any joint strategy $s$ with
$\Delta^*_1=\Delta^*_1(s)\geq B \Delta_2 \log n$ for a large enough
constant $B$ (depending on constants $c_{\max}/w_{\min}$, $\alpha$,
$F_{\max}$) satisfies the condition $(\star)$. Then starting from
the joint strategy with social cost $O(1)\cdot OPT$ obtained from
the algorithm in Section \ref{sec:defsng}, one can greedily
construct a joint strategy $s^{ad}$ satisfying the condition
$(\star)$ with social cost $O(\Delta_2 \log n)\cdot OPT$ (greedily
turning \off\ every agent that is the unique \on\ element in fewer
than $B \Delta_2 \log n$ sets). For the advertising strategy
$s^{ad}$ satisfying the condition $(\star)$ as constructed above,
the conclusion of Corollary \ref{cor2} follows from Theorem
\ref{PSAoutcome-2}.
\end{proof}

\subsection*{Proof of Theorem
\ref{PSAoutcome-1}}\label{sec:pfofthmPSAoutcome-1} From
\eqref{eq:phiandcost} and $F_{\max}=O(1)$, any sequence of best
response moves increases social cost by at most a constant factor.
All agents best respond in Phase 2, and hence
$\texttt{cost}(s'')=O(\texttt{cost}(s'))$. It suffices to bound
$\texttt{cost}(s')$. At a high level, we do this by providing a
bound (i.e.\ Lemma \ref{PSAEf-1}) on the total weight of uncovered
sets that are not uncovered in $s^{ad}$ and then we give a bound
(i.e.\ Lemma \ref{PSARf-1}) on the number of agents that are \off\
in $s^{ad}$ but \on\ at the end of Phase 1.

First, let us introduce some notation. We say two agents contained
in a common set are neighbors. Let $L$ and $R$ denote the set of
agents that are \on\ and \off\ in
$s^{ad}$, respectively. %For any agent sets $V_1$ and $V_2$, $(V_1,V_2)$ denotes a collection (multi-set) of sets containing agents from both sets. $\{v\}$ is written as $v$ in this context.
% For agent $\ell\in L$, define
%$d_\ell=|(\ell,R)|$.%, and note that if $s^{ad}$ is a Nash equilibrium, we will have $d_\ell\geq c$ for any $\ell\in L$.
%\footnote{Note graph $L$ and $R$ do not define a bipartition of $G$,
%which need not be bipartite.  In the terminology of Section
%\ref{sec:defsng}, we have $L=\text{ON}(s^{ad})$,
%$R=\text{OFF}(s^{ad})$, and $(R,R)=\text{UNC}(s^{ad})$.}
Let $L_\off$ (and $R_\on$) denotes the set of agents in $L$ (and
$R$), who are \off\ (and \on) in $s'$. Let $\F_R$ denote the
collection of sets uncovered in $s^{ad}$, and let $\F_{bad}$ denote
the collection of sets not in $\F_R$ but uncovered in $s'$. Then
from \eqref{eq:cost}, \eqref{eq:assume} and $F_{\max}=O(1)$, we have
\begin{align*}
\E[\texttt{cost}(s')]&~\leq~ \texttt{cost}(s^{ad})+%c_{\max}\cdot
\E[c(R_\on)]+F_{\max}\cdot E[w(\F_{bad})]\\
&~=~ \texttt{cost}(s^{ad})+ O\left(\E[|R_\on|]\right)+O(
E[w(\F_{bad})]),
\end{align*} where we note that
$$\texttt{cost}(s^{ad})~\geq~ c(L) + w(\F_R)~=~\Omega(|L|) + \Omega( |\F_R|).$$
 Therefore, the following two lemmas bounding $w(\F_{bad})$ and
$|R_\on|$ leads to the desired bound on $\texttt{cost}(s')$, which
completes the proof of Theorem \ref{PSAoutcome-1} from
$\texttt{cost}(s'')=O(\texttt{cost}(s'))$.

\begin{lemma}\label{PSAEf-1}
$w(\F_{bad})\leq c(L)$. \end{lemma}
\begin{proof}
 Each set in $\F_{bad}$ should contain an \off\ element in
$L_\off$ that is best responding in $s'$. Hence,
$$w(\F_{bad})~=~w\Big(\bigcup_{\ell\in L_\off} \F^u_{\ell}\Big)~\leq~
\sum_{\ell\in L_\off}w( \F^u_{\ell})~\leq~ \sum_{\ell\in L_\off}
c_\ell~\leq~ c(L),$$ where the second inequality is from the fact
that $\ell$ is best responding (i.e.\ its cost exceeds the total
weight of uncovered sets including it since it chooses to be \off).
This completes the proof of Lemma \ref{PSAEf-1}.
\end{proof}

\begin{lemma}\label{PSARf-1}
$\E[|R_\on |]\leq
\begin{cases}
|\F_R|+\Delta_2\cdot |L|^2+O(\Delta_2)\cdot |L| &\mbox{if}~F_{\max}=O(1)\\
|\F_R|+O(|L|) &\mbox{if}~F_{\max}=2
\end{cases}$.
\end{lemma}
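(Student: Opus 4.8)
The plan is to charge every agent of $R_\on$ to a distinct set that it \emph{alone} covers in $s'$, and then to control how many such sets can arise. Fix $i\in R_\on$. Since receptive agents of $R$ stay \off, $i$ is non-receptive, so $i$ is best responding in $s'$; as $i$ chooses \on, switching to \off\ would create incident uncovered sets of total weight at least $c_i>0$, so there is at least one set $\sigma_i\ni i$ that becomes uncovered when $i$ switches to \off, i.e.\ whose only \on\ element in $s'$ is $i$. A set has at most one sole coverer, so $i\mapsto\sigma_i$ is injective, and thus $|R_\on|$ is at most the number of sets arising as some $\sigma_i$. Now split: either $\sigma_i$ is already uncovered in $s^{ad}$, so $\sigma_i\in\F_R$ (these contribute at most $|\F_R|$ in total), or $\sigma_i$ is covered in $s^{ad}$, in which case every element of $\sigma_i\cap L$ is \on\ in $s^{ad}$ but \off\ in $s'$, so $\emptyset\neq\sigma_i\cap L\subseteq L_\off$; also $\sigma_i$ contains the \on\ agent $i\in R$, hence $\sigma_i\supseteq\{i,\ell\}$ for some $\ell\in L$. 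Call such a $\sigma_i$ \emph{flipped}. It remains to bound the expected number of flipped $\sigma_i$ by $\Delta_2|L|^2+O(\Delta_2)|L|$ in general and by $O(|L|)$ when $F_{\max}=2$.

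The key structural fact, using assumption~\eqref{eq:assume} (so $w_{\min}>c_{\max}$ for $n$ large), is that any agent that is \off\ and best responding in $s'$ has \emph{no} incident uncovered set at all: a single such set would already impose cost $\ge w_{\min}>c_{\max}\ge c_i$. Every $\ell\in L_\off$ is non-receptive (a receptive agent of $L$ would play \on\ as in $s^{ad}$), hence best responding, so this applies to it: every set containing $\ell$ is covered in $s'$ by an agent other than $\ell$. When $F_{\max}=2$ this means every neighbor of $\ell$ is \on\ in $s'$; in particular every $R$-neighbor of $\ell$ is \on\ and therefore non-receptive. Consequently $\Pr[\ell\in L_\off]\le(1-\alpha)^{\deg_R(\ell)}$, where $\deg_R(\ell)$ is the (deterministic, not random) number of $R$-neighbors of $\ell$. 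Since for $F_{\max}=2$ each flipped $\sigma_i$ is an edge $\{i,\ell\}$ with $\ell\in L_\off$ and the $\sigma_i$ are distinct, the number of flipped $\sigma_i$ is at most $\sum_{\ell\in L}\mathbb{1}[\ell\in L_\off]\,\deg_R(\ell)$, whence
\[
\E\big[\#\{\text{flipped }\sigma_i\}\big]\;\le\;\sum_{\ell\in L}(1-\alpha)^{\deg_R(\ell)}\deg_R(\ell)\;\le\;|L|\cdot\max_{d\ge 0}(1-\alpha)^{d}d\;=\;O(|L|),
\]
giving $\E[|R_\on|]\le|\F_R|+O(|L|)$.

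For general $F_{\max}=O(1)$ I would split the flipped sets by $|\sigma_i\cap L|$. Those with $|\sigma_i\cap L|\ge 2$ are counted by pairs inside $L$: there are at most $\binom{|L|}{2}\Delta_2$ sets containing two \on\ agents of $s^{ad}$, which gives the $\Delta_2|L|^2$ term. Those with $\sigma_i\cap L=\{\ell\}$ still force $\ell\in L_\off$; here the receptiveness argument above still limits how many such $\ell$ ``survive'' (its size-$2$ incident sets must all point to non-receptive $R$-agents, contributing an $O(1)$-per-$\ell$ term as in the $F_{\max}=2$ case), while a flipped set through a fixed surviving $\ell$ that has $\ge 3$ elements is charged to the pair $\{\ell,i\}$ it forms with its sole coverer $i\in R_\on$ — at most $\Delta_2$ sets per pair — and the injectivity of $i\mapsto\sigma_i$ keeps this from reintroducing a factor of $|R_\on|$. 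Summing these contributions yields the $O(\Delta_2)|L|$ term and hence $\E[|R_\on|]\le|\F_R|+\Delta_2|L|^2+O(\Delta_2)|L|$.

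I expect this last step to be the main obstacle: trading the first-order degree $\Delta_1$ of an $L$-agent (the naive bound on the number of flipped sets through a single $\ell$) for the second-order degree $\Delta_2$, by pairing $\ell$ with the $R$-agent that now covers the set, without the count becoming circular through $|R_\on|$. The $F_{\max}=2$ case is clean precisely because there a stably-\off\ $L$-agent has an entirely \on\ neighborhood, which makes ``$\ell\in L_\off$'' an event of probability exponentially small in $\deg_R(\ell)$; for larger sets this dichotomy weakens and must be supplemented by the pair-charging above.
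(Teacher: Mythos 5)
Your overall architecture matches the paper's: charge each $r\in R_\on$ injectively to a set $\sigma_r$ it solely covers, split according to whether $\sigma_r\in\F_R$ or $\sigma_r$ is a ``flipped'' set whose $L$-part is nonempty and lies in $L_\off$, and bound the flipped sets with at least two elements of $L$ by $\Delta_2|L|^2$. Your $F_{\max}=2$ branch is essentially the paper's argument, with one caveat: you read \eqref{eq:assume} as forcing $w_{\min}>c_{\max}$, so that an \off\ best-responder has no uncovered incident set and hence all its $R$-neighbors are non-receptive. The paper treats $c_{\max}/w_{\min}$ as an arbitrary constant, possibly at least $1$, so an \off\ best-responding $\ell$ may tolerate up to $\lfloor c_{\max}/w_{\min}\rfloor$ uncovered incident sets; the event $\ell\in L_\off$ is then bounded by a binomial lower tail (at most $\lfloor c_{\max}/w_{\min}\rfloor$ receptive witnesses among $\deg_R(\ell)$ independent trials) rather than by $(1-\alpha)^{\deg_R(\ell)}$, and the estimate $\sum_{\ell}\deg_R(\ell)\Pr[\ell\in L_\off]=O(|L|)$ is supplied by Proposition \ref{tech}. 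Your version is the special case $\lfloor c_{\max}/w_{\min}\rfloor=0$.

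The genuine gap is the $F_{\max}=O(1)$ branch; you correctly locate the obstacle, but the pair-charging you propose does not resolve it. A flipped set with unique $L$-element $\ell$ and sole coverer $i$ determines the pair $\{\ell,i\}$, and injectivity of $i\mapsto\sigma_i$ already gives at most one set per pair, so your count reduces to the number of realized pairs, which is at most $\sum_{\ell\in L_\off}|\F_\ell^*|$ where $\F_\ell^*$ is the family of sets containing $\ell$ as the unique element of $L$; this is a first-order-degree quantity and no combinatorial bookkeeping improves it. The paper's resolution is probabilistic: it bounds $\E\bigl[\sum_{\ell\in L_\off}|\F_\ell^*|\bigr]=\sum_{\ell\in L}|\F_\ell^*|\cdot\Pr[\ell\in L_\off]$ and shows each summand is $O(\Delta_2)$ even when $|\F_\ell^*|$ is large. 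The missing ingredient is the extraction of a subfamily $\widehat\F_\ell^*\subseteq\F_\ell^*$ whose members have pairwise disjoint intersections with $R$ and satisfy $|\widehat\F_\ell^*|\geq|\F_\ell^*|/((F_{\max}-1)\Delta_2)$ --- this is precisely where $F_{\max}=O(1)$ and $\Delta_2$ enter. The events ``all of $\rho\cap R$ are receptive'' for $\rho\in\widehat\F_\ell^*$ are then independent, each of probability at least $\alpha^{F_{\max}}$, and at most $\lfloor c_{\max}/w_{\min}\rfloor$ of them can occur if $\ell$ ends up \off; Proposition \ref{tech} then yields $|\F_\ell^*|\Pr[\ell\in L_\off]=O(\Delta_2)$. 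In short, the large per-$\ell$ multiplicity is killed by the exponentially small probability that such an $\ell$ is \off, not by pairing $\ell$ with its coverer.
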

\begin{proof}

Since each element $r$ in $R_\on$ plays best response in $s'$, $r$
should be contained in a set $\sigma_r$ as the unique \on\ element.
We define disjoint sets $R_\on^{(1)}$ and $R_\on^{(2)}$ such that
$$R_\on~=~R_\on^{(1)}\cup R_\on^{(2)}$$ and $r\in R_\on^{(1)}$ if
$\sigma_r \in\F_R$. By definition of $R_\on^{(1)}$, it easily
follows that
\begin{equation}
|R_\on^{(1)}|~\leq~|\F_R|.\label{eq:PSARf-4}
\end{equation}

Now consider $R_\on^{(2)}$. Let $\F_\off$ be the collection of
`left' uncovered sets, i.e.\ $\sigma\in\F_\off$ if $\sigma \cup L$
is a non-empty subset of $L_\off$. Hence, by definition of
$R_\on^{(2)}$, $\sigma_r$ is in $\F_\off$ for each $r\in
R_\on^{(2)}$. This implies that
\begin{equation}
|R_\on^{(2)}|~\leq~|\F_\off|.\label{eq:PSARf-5}
\end{equation}
We let $\F_\off^*\subset \F_\off$ be the collection of sets
containing a unique element in $L_\off$. Then, we have
\begin{equation}
|\F_\off\setminus \F_\off^*|~\leq~ \begin{cases} \Delta_2\cdot
|L|^2&\mbox{if}~~F_{\max}=O(1)\\
\qquad 0&\mbox{if}~~F_{\max}=2
\end{cases}.\label{eq:PSARf-6}
\end{equation}
This is because the number of sets with more than one element in
$L_\off$ is bounded by $\Delta_2\cdot |L|^2$ (remember that each
pair of agents is contained in at most $\Delta_2$ common sets).
Clearly, there are no such sets when $F_{\max}=2$.

We now bound the expected size of $\F_\off^*$. It follows that
\begin{equation}
\E[|\F_\off^*|]~=~ \sum_{\ell\in L}|\F_\ell^*|\cdot\Pr[\ell\text{ is
\off}],\label{eq:PSARf-2}\end{equation} where we let $\F_\ell^*$ be
the collection of sets including $\ell$ as the unique element in
$L$. Further, we observe that
\begin{align}
\Pr[\ell\text{ is \off}]
&~\leq~ \Pr[|\{\rho\in\F_\ell^*: \text{ all }\rho\cap R\text{ are \off}\}|\leq c_{\max}/w_{\min}]\notag\\
&~\leq~ \Pr[|\{\rho\in\F_\ell^*: \text{ all }\rho\cap R\text{ are receptive}\}|\leq c_{\max}/w_{\min}]\notag\\
&~\leq~ \Pr[|\{\rho\in\widehat\F_\ell^* : \text{ all }\rho\cap
R\text{ are receptive}\}|\leq c_{\max}/w_{\min}], \notag
\end{align}
where we define $\widehat\F_\ell^*\subseteq \F_\ell^*$ such that no
pair of sets in $\widehat \F_\ell^*$ have common elements in $R$ and
the size of $\widehat\F_\ell^*$ is not too small i.e.\
$|\F_\ell^*|\leq (F_{\max}-1)\Delta_2\cdot|\widehat \F_\ell^*|$.
From definitions of $F_{\max}$ and $\Delta_2$, the existence of such
set $\widehat\F_\ell^*$ follows. Since no pair of sets in $\widehat
\F_\ell^*$ have common elements in $R$, the events that all
$\rho\cap R$ are receptive for $\rho \in \widehat \F_\ell^*$ become
independent with each other and each happens with probability at
least $\alpha^{F_{\max}}$. Therefore,
\begin{align}
\Pr[\ell\text{ is \off}] &~\leq~ \Pr[|\{\rho\in\widehat\F_\ell^* :
\text{ all }\rho\cap R\text{ are receptive}\}|\leq
c_{\max}/w_{\min}]\notag\\
 &~\leq~ \sum_{i=0}^{\lfloor
c_{\max}/w_{\min}\rfloor} {|{\widehat \F}_\ell^{*}| \choose
i}\left(1-\alpha^{F_{\max}}\right)^{|{\widehat
\F}_\ell^{*}|-i}\left(\alpha^{F_{\max}}\right)^i. \label{eq:PSARf-1}
\end{align}
Combining \eqref{eq:PSARf-2} and \eqref{eq:PSARf-1} implies that
\begin{align}
\E[|\F_\off^*|]& ~\leq~ \sum_{\ell\in L}|\F_\ell^*|\cdot
\sum_{i=0}^{\lfloor c_{\max}/w_{\min}\rfloor} {|{\widehat
\F}_\ell^{*}| \choose i}\left(1-\alpha^{F_{\max}}\right)^{|{\widehat
\F}_\ell^{*}|-i}\left(\alpha^{F_{\max}}\right)^i\notag\\
&~\leq~ {(F_{\max}-1)\Delta_2}\sum_{\ell\in L} \sum_{i=0}^{\lfloor
c_{\max}/w_{\min}\rfloor} |\widehat \F_\ell^*|{|{\widehat
\F}_\ell^{*}| \choose i}\left(1-\alpha^{F_{\max}}\right)^{|{\widehat
\F}_\ell^{*}|-i}\left(\alpha^{F_{\max}}\right)^i\notag\\
&~=~ O(\Delta_2)\cdot |L|,%\cdot \lceil c_{\max}/w_{\min}\rceil,
\label{eq:PSARf-3}
\end{align}
where the last equality is from the following proposition of which
proof is presented in Appendix \ref{sec:pftech}.
\begin{proposition}\label{tech}
For constant $a\in(0,1)$ and $0< c\leq d$,
\[
\sum_{i=0}^{\floor c} d\binom{d}{i}(1-a)^{d-i} \alpha^i=O(\lceil
c\rceil).
\] \end{proposition}
Finally, combining \eqref{eq:PSARf-4}, \eqref{eq:PSARf-5},
\eqref{eq:PSARf-6} and \eqref{eq:PSARf-3} leads to the desired
conclusion of Lemma \ref{PSARf-1}, where note that $\Delta_2=1$ when
$F_{\max}=2$.
\end{proof}

\subsection*{Proof of Theorem \ref{PSAoutcome-2}}

We will use the same notation $R_\on$ and $\F_{bad}$ as in the proof
of Theorem \ref{PSAoutcome-1}. As we explain in the proof of Theorem
\ref{PSAoutcome-1}, it suffices to prove that the social cost at the
end of Phase 1 is $O(\texttt{cost}(s^{ad}))$ with probability
$1-1/n$.

To this end, the following lemma establishes the condition $(\star)$
ensures that all agents in $L$ turn \on\ with probability $1-1/n$ at
the end of Phase 1. Under the event, only sets in $\F_R$ are
uncovered and the additional social cost incurred by agents in
$R_\on$. The lemma shows that such additional cost is at most
$\texttt{cost}(s^{ad})$. Hence, this completes the proof of Theorem
\ref{PSAoutcome-2}.

\begin{lemma}\label{PSARf-2}
If the advertising strategy $s^{ad}$ satisfies the condition
$(\star)$, then
$$\F_{bad}=\emptyset\quad\mbox{and}\quad c(R_\on)\leq w(\F_R)\qquad\mbox{with probability}~~1-\frac1n.$$
\end{lemma}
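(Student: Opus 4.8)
The plan is to exhibit a single high-probability event, namely $\{L_\off=\emptyset\}$ (every agent that is \on\ in $s^{ad}$ stays \on\ in $s'$), and to show that on this event both asserted inequalities hold deterministically, while the event itself occurs with probability at least $1-1/n$.

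First I would bound $\Pr[\ell\in L_\off]$ for a fixed $\ell\in L$. If $\ell\in L_\off$ then $\ell$ must be non-receptive (a receptive agent of $L$ plays \on\ throughout Phase 1) and best responding in $s'$, so its \off-cost is at most $c_\ell\le c_{\max}$; since every set has weight at least $w_{\min}$, this forces $|\F_\ell^u(s')|\le\lfloor c_{\max}/w_{\min}\rfloor=:k$. Let $\F_\ell^*$ be the sets in which $\ell$ is the unique \on\ element of $s^{ad}$, so $|\F_\ell^*|\ge\Delta^*_1$ by definition of the core degree. Exactly as in the proof of Lemma \ref{PSARf-1}, I would extract $\widehat\F_\ell^*\subseteq\F_\ell^*$ whose members have pairwise disjoint intersections with $R$ and with $|\widehat\F_\ell^*|\ge|\F_\ell^*|/\bigl((F_{\max}-1)\Delta_2\bigr)\ge\Delta^*_1/\bigl((F_{\max}-1)\Delta_2\bigr)$; this is where bounded $F_{\max}$ and $\Delta_2$ are used. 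The key point is that for $\rho\in\widehat\F_\ell^*$, if every agent of $\rho\cap R$ is receptive then $\rho$ is uncovered in $s'$ (all of $\rho\cap R$ stays \off, $\ell$ is \off, and $\rho\cap L=\{\ell\}$), hence $\rho\in\F_\ell^u(s')$. So on $\{\ell\in L_\off\}$ at least $|\widehat\F_\ell^*|-k$ of the sets $\rho\in\widehat\F_\ell^*$ must instead contain a non-receptive agent of $R$; by disjointness of the $R$-parts these events are independent, each of probability at most $1-\alpha^{F_{\max}}$, so $\Pr[\ell\in L_\off]\le\binom{|\widehat\F_\ell^*|}{k}\bigl(1-\alpha^{F_{\max}}\bigr)^{|\widehat\F_\ell^*|-k}$. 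Since $|\widehat\F_\ell^*|\ge\Delta^*_1/\bigl(\Delta_2(F_{\max}-1)\bigr)$, condition $(\star)$ (applied with $x=|\widehat\F_\ell^*|$, after the crude bound $\binom{|\widehat\F_\ell^*|}{k}\le k\,|\widehat\F_\ell^*|^{\,k}$) bounds this by $1/n^2$, and a union bound over the at most $n$ agents of $L$ gives $\Pr[L_\off\neq\emptyset]\le1/n$.

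Next I would record the two deterministic consequences of $L_\off=\emptyset$. For $\F_{bad}=\emptyset$: a set $\sigma$ covered in $s^{ad}$ contains some $\ell\in L$ which is still \on\ in $s'$, so $\sigma$ stays covered; hence no set outside $\F_R$ is uncovered in $s'$. For the cost bound, each $r\in R_\on$ is non-receptive (a receptive agent of $R$ stays \off) and \on\ in the equilibrium $s'$, so being \on\ is no worse than \off: $c_r\le w(W_r)$, where $W_r$ is the collection of sets for which $r$ is the unique \on\ element in $s'$. A set in $W_r$ cannot contain an agent of $L$ (it would be \on\ in $s'$, violating uniqueness), so $W_r\subseteq\F_R$; and a set has at most one unique \on\ element, so the families $\{W_r\}_{r\in R_\on}$ are pairwise disjoint. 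Summing, $c(R_\on)=\sum_{r\in R_\on}c_r\le\sum_{r\in R_\on}w(W_r)=w\bigl(\bigcup_{r\in R_\on}W_r\bigr)\le w(\F_R)$.

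The main obstacle is the first paragraph's probabilistic step: the subtle points are choosing $\widehat\F_\ell^*$ so that the receptivity events are genuinely independent — the same counting trick as in Lemma \ref{PSARf-1}, relying on bounded $F_{\max}$ and $\Delta_2$ — and then matching the resulting binomial-tail estimate to the exact form of condition $(\star)$ on the range $x\ge\Delta^*_1/\bigl(\Delta_2(F_{\max}-1)\bigr)$, so that the union bound over $L$ closes at $1/n$. The other two steps are purely deterministic once $L_\off=\emptyset$ is in hand.
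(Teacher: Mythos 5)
Your proposal is correct and follows essentially the same route as the paper: bound $\Pr[\ell\in L_\off]$ via the disjoint subfamily $\widehat\F_\ell^*$ and a binomial-tail estimate matched to condition $(\star)$, union-bound over $L$ to get $L_\off=\emptyset$ with probability $1-1/n$, and then derive $\F_{bad}=\emptyset$ and $c(R_\on)\le w(\F_R)$ deterministically from the unique-\on-element and disjointness arguments. Your version is in fact slightly more explicit than the paper's in matching the tail bound to the exact form of $(\star)$ and in justifying $W_r\subseteq\F_R$.
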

\begin{proof}
We will use the same notation in the proof Lemma \ref{PSARf-1}. As
in the proof Lemma \ref{PSARf-1}, for any $\ell\in L$ there is some
subset $\widehat \F_\ell^*\subseteq \F_\ell^*$ such that no pair of
sets in $\widehat\F_\ell^*$ have common elements in $R$ and
$|\widehat\F_\ell^*|\geq
\frac{|\F_\ell^*|}{(F_{\max}-1)\Delta_2}\geq
\frac{\Delta^*_1}{(F_{\max}-1)\Delta_2}$.% for $\Delta^*_1$ from condition $(\star)$.
Then as we derived in in the proof Lemma \ref{PSARf-1},
\begin{align*}
\Pr[\ell\text{ is \off}]&~\leq~  \sum_{i=0}^{\lfloor
c_{\max}/w_{\min}\rfloor}{|\widehat\F_\ell^*| \choose
i}\left(1-\alpha^{F_{\max}}\right)^{|\widehat\F_\ell^*|-i}\left(\alpha^{F_{\max}}\right)^i~\leq
\frac1{n^2},
\end{align*}
where the last inequality is from the condition $(\star)$. From the
union bound, $\Pr[L_\off=\emptyset]\geq 1-1/n$ and hence
$\F_{bad}=\emptyset$.

Now assume the event that all nodes in $L$ are \on. Observe that for
each best responding $r\in R_\on$, $c_r$ is no greater than the
total weight of all sets containing $r$ as the unique \on\ agent.
Since we assume all nodes in $L$ are \on\, these sets are a subset
of $\F_R$. Further, since there is no overlap in these sets between
different agents in $R_\on$, we can sum over all $r\in R_\on$ to
derive $c(R_\on)\leq w(\F_R)$. This completes the proof of Lemma
\ref{PSARf-2}.

\end{proof}

\subsection{Extension to Unbounded Costs and Weights}

%We remind that in this paper we assume costs and weights are bounded
%by some constants as in \eqref{eq:assume}.
%Theorem \ref{PSAoutcome-1} and Theorem \ref{PSAoutcome-2} are under
%the assumption \eqref{eq:assume}, which is for clean presentations.
%However,

All the results and proof techniques in this paper naturally extend
to general weights and costs. In particular, one can obtain the
following theorem (analogous to Theorem \ref{PSAoutcome-1}) in the
PSA model without the assumption \eqref{eq:assume} via calculating
explicit quantities in each step in the proof of Theorem
\ref{PSAoutcome-1}.

\begin{theorem}
For any advertised strategy $s^{ad}$ in the PSA model,
\begin{equation}
\E[\texttt{cost}(s'')]\leq
\begin{cases}
\displaystyle O(\Delta_2\cdot\ceil{c_{\max}/w_{\min}}\cdot
c_{\max}/c_{\min}^3)\cdot\texttt{cost}(s^{ad})^2 &\mbox{if}~F_{\max}=O(1)\\
\displaystyle O(\ceil{c_{\max}/w_{\min}}\cdot
c_{\max}/c_{\min})\cdot\texttt{cost}(s^{ad}) &\mbox{if}~F_{\max}=2
\end{cases}. \end{equation}
%where $\Delta=C_1\cdot \ceil{\frac {c_{\max}} {w_{\min}} }\cdot \frac{c_{\max}}{c_{\min}}$ and $C_1=\max_{i,j\in[n]}|\{\sigma:i,j\in\sigma\}|$. \footnote{Note $\Delta=O(1)$ when costs and weights are $O(1)$ and each pair of elements is in $O(1)$ sets.}
\end{theorem}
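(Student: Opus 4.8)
The plan is to re-run the proof of Theorem~\ref{PSAoutcome-1} essentially verbatim, but this time keeping the quantities $c_{\max}$, $c_{\min}$, $w_{\min}$ and the ratio $c_{\max}/w_{\min}$ explicit wherever that proof absorbed them into $O(1)$ via \eqref{eq:assume}. The skeleton is unchanged: Phase~2 invokes only \eqref{eq:phiandcost} and the fact that $\Phi$ does not increase under best response, neither of which mentions costs or weights, so $\texttt{cost}(s'')\le F_{\max}\,\texttt{cost}(s')$ still holds pointwise; and the same decomposition gives, with $L,R,R_\on,L_\off,\F_R,\F_{bad},\F_\off,\F_\off^*,\F_\ell^*,\widehat\F_\ell^*$ exactly as in that proof,
\[
\E[\texttt{cost}(s')]\;\le\;\texttt{cost}(s^{ad})+\E[c(R_\on)]+F_{\max}\,\E[w(\F_{bad})],
\]
while Lemma~\ref{PSAEf-1} ($w(\F_{bad})\le c(L)\le\texttt{cost}(s^{ad})$) is untouched, its proof also using no parameter bounds.

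The only step that needs genuine re-derivation is Lemma~\ref{PSARf-1}. As there, split $R_\on=R_\on^{(1)}\cup R_\on^{(2)}$ according to whether the set $\sigma_r$ in which $r$ is the unique \on\ element belongs to $\F_R$; then $|R_\on^{(1)}|\le|\F_R|$ and $|R_\on^{(2)}|\le|\F_\off^*|+|\F_\off\setminus\F_\off^*|$, where $|\F_\off\setminus\F_\off^*|\le\Delta_2|L|^2$ when $F_{\max}=O(1)$, and when $F_{\max}=2$ every $\sigma_r$ counted by $R_\on^{(2)}$ has exactly one element in $L$ and hence already lies in $\F_\off^*$ (and $\Delta_2=1$), so that term disappears. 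For $\E[|\F_\off^*|]$ keep the binomial bound
\[
\Pr[\ell\text{ is \off}]\;\le\;\sum_{i=0}^{\floor{c_{\max}/w_{\min}}}\binom{|\widehat\F_\ell^*|}{i}\bigl(1-\alpha^{F_{\max}}\bigr)^{|\widehat\F_\ell^*|-i}\bigl(\alpha^{F_{\max}}\bigr)^{i},
\]
but now invoke Proposition~\ref{tech} with $a=\alpha^{F_{\max}}$ and the no-longer-constant value $c=c_{\max}/w_{\min}$ to get $|\F_\ell^*|\cdot\Pr[\ell\text{ is \off}]=O\bigl(\Delta_2\ceil{c_{\max}/w_{\min}}\bigr)$ uniformly in $\ell\in L$; in the degenerate regime $|\widehat\F_\ell^*|<\floor{c_{\max}/w_{\min}}$ one instead bounds $|\F_\ell^*|\le(F_{\max}-1)\Delta_2|\widehat\F_\ell^*|<(F_{\max}-1)\Delta_2\,c_{\max}/w_{\min}$ directly and uses $\Pr[\ell\text{ is \off}]\le1$. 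Summing over $L$ yields $\E[|R_\on|]\le|\F_R|+\Delta_2|L|^2+O\bigl(\Delta_2\ceil{c_{\max}/w_{\min}}\bigr)|L|$ for $F_{\max}=O(1)$, and $\E[|R_\on|]\le|\F_R|+O\bigl(\ceil{c_{\max}/w_{\min}}\bigr)|L|$ for $F_{\max}=2$.

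It remains to convert these counting estimates into cost estimates. Applying \eqref{eq:cost} to $s^{ad}$ (whose \on\ set is $L$ and whose uncovered family is $\F_R$) gives $c(L)\le\texttt{cost}(s^{ad})$ and $w(\F_R)\le\texttt{cost}(s^{ad})$, hence $|L|\le c(L)/c_{\min}\le\texttt{cost}(s^{ad})/c_{\min}$ and $|\F_R|\le w(\F_R)/w_{\min}\le\texttt{cost}(s^{ad})/w_{\min}$; also $c(R_\on)\le c_{\max}|R_\on|$. Substituting into $\texttt{cost}(s'')\le F_{\max}\bigl(\texttt{cost}(s^{ad})+c_{\max}|R_\on|+F_{\max}c(L)\bigr)$: for $F_{\max}=O(1)$ the dominant term is $c_{\max}\Delta_2|L|^2\le(c_{\max}\Delta_2/c_{\min}^2)\,\texttt{cost}(s^{ad})^2$, and folding in the remaining terms, which are linear in $\texttt{cost}(s^{ad})$, using the same conversions together with the $\ceil{c_{\max}/w_{\min}}$ factor produced by Proposition~\ref{tech}, yields $\E[\texttt{cost}(s'')]=O\bigl(\Delta_2\ceil{c_{\max}/w_{\min}}\,c_{\max}/c_{\min}^3\bigr)\texttt{cost}(s^{ad})^2$; for $F_{\max}=2$ there is no quadratic term and the bound collapses to $O\bigl(\ceil{c_{\max}/w_{\min}}\,c_{\max}/c_{\min}\bigr)\texttt{cost}(s^{ad})$.

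The main obstacle I anticipate is exactly this last round of accounting: carrying the parameters through Proposition~\ref{tech} and through the passage from counts to costs, and then checking that each of the several additive error terms (the bare $\texttt{cost}(s^{ad})$, the $c_{\max}|\F_R|$ term, and the $c_{\max}\Delta_2\ceil{c_{\max}/w_{\min}}|L|$ term) is indeed dominated by the single advertised coefficient — which is what dictates its somewhat generous exponents on $c_{\min}$ — in particular being careful about which lower bound on $\texttt{cost}(s^{ad})$ one uses to absorb the linear terms into the quadratic one. Everything else is a faithful transcription of the bounded-parameter argument for Theorem~\ref{PSAoutcome-1}.
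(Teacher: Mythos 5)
Your proposal is correct and follows exactly the route the paper itself prescribes: the paper offers no proof of this theorem beyond the remark that it follows ``via calculating explicit quantities in each step in the proof of Theorem \ref{PSAoutcome-1},'' and that parameter-tracking rerun of Lemmas \ref{PSAEf-1} and \ref{PSARf-1} (with Proposition \ref{tech} invoked at $c=c_{\max}/w_{\min}$, including the degenerate case $|\widehat\F_\ell^*|<c_{\max}/w_{\min}$ that the paper glosses over) is precisely what you carry out. The one caveat you rightly flag at the end --- that absorbing the additive terms linear in $\texttt{cost}(s^{ad})$ into the single quadratic coefficient requires that coefficient times $\texttt{cost}(s^{ad})$ to be $\Omega(1)$, e.g.\ via $\texttt{cost}(s^{ad})\geq c_{\min}$ and $c_{\min}^2=O(\Delta_2\ceil{c_{\max}/w_{\min}}c_{\max})$ --- is an imprecision inherent in the theorem statement itself (which omits any trailing $O(\texttt{cost}(s^{ad}))$ term) rather than a gap in your argument.
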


\section{Learn-then-decide}\label{ltd}
%\subsection{Dynamics Model}
We now study the set cover game in the learn-then-decide (LTD) model
of ~\cite{bbm-ics-10}. In contrast to PSA, agents in LTD are neither
strictly receptive nor strictly best responders in the initial
exploration phase, but they choose one of these options for the
final exploitation phase:
\begin{itemize}
\item[{\bf 1:}] Play begins in an arbitrary state, and a central authority advertises joint action $s^{ad}\in S$.  Player $i$ is associated with fixed probability $p_i\geq \beta\in(0,1)$. Agents are chosen to update uniformly at random for each of $T^*$ time steps. When $i$ updates, he plays $s^{ad}_i$ with probability $p_i$ or best response with probability $1-p_i$. The state at time $T^*$ is denoted $s'$.
\item[{\bf 2:}] At time $T^*$, all agents in random order individually commit arbitrarily to $s^{ad}_i$ or best response. Then agents take turns in random order playing their chosen strategy until they reach a Nash equilibrium $s''$ given the fixed behavior of $s^{ad}$ followers.
\end{itemize}

\subsection*{Effect of Advertising in LTD}
For the convenience, in this section we again assume costs and
weights are bounded above and below, i.e.\ the assumption
\eqref{eq:assume}.
%\begin{equation}\label{eq:assume}
%c_{\max}:=\max_{i\in[n]} c_i=O(1)\quad c_{\min}:=\min_{i\in[n]}
%c_i=\omega(1)\quad w_{\max}:=\max_{\sigma\in\F} w_\sigma=O(1)\quad
%w_{\min}:=\min_{\sigma\in\F} w_\sigma=\omega(1).
%\end{equation}
The following result in the LTD model is analogous to Theorem
\ref{PSAoutcome-1} in the PSA model.
\begin{theorem}\label{LTDoutcome-1}
There exists a $T^*\in poly(n)$ such that for any advertising
strategy $s^{ad}$ in the LTD model,
%For every $\mathcal G=\mathcal G(G,c)$ in the PSA model,
\begin{equation}\label{eq1:mainltd}
E[\texttt{cost}(s'')]\leq
\begin{cases}
\displaystyle O(\Delta_2^2)\cdot\texttt{cost}(s^{ad})^2 &\mbox{if}~F_{\max}=O(1)\\
\displaystyle O(1)\cdot\texttt{cost}(s^{ad}) &\mbox{if}~F_{\max}=2
\end{cases}.
\end{equation} %Further, if the advertising
%strategy $s^{ad}$ satisfies the condition $(\star)$ and
%$F_{\max}=O(1)$,
%\begin{equation}\label{eq2:mainltd}
%\texttt{cost}(s'')= O(\texttt{cost}(s^{ad}))\qquad\mbox{with high
%probability}.
%\end{equation}
\end{theorem}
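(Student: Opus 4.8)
The plan is to replay the proof of Theorem~\ref{PSAoutcome-1}, with the static set of receptive agents replaced by a dynamic surrogate read off from the random update sequence, reusing the notation $L,R,R_\on,L_\off,\F_R,\F_{bad},\F_\off,\F_\ell^*,\widehat\F_\ell^*$ from that proof. First I would fix $T^*=\mathrm{poly}(n)$ large enough that, with probability $1-1/n$, every agent updates $\Omega(\log n)$ times and its last update before $T^*$ is recent (a coupon-collector estimate). Because $F_{\max}=O(1)$, \eqref{eq:phiandcost} shows that each best-response move changes the social cost by only a constant factor, so the phase-2 best-response subroutine inflates the cost by at most a constant; and the phase-2 commitments add only $O(c(L))$, since turning each $L$-agent \on\ changes $\Phi$ by at most $c_{\max}$ and can only help the remaining agents, while the sets that an $R$-agent committing to \off\ newly uncovers are either already uncovered in $s^{ad}$ (contributing $O(w(\F_R))$ in total) or covered in $s^{ad}$ and hence charged to $\F_{bad}$. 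So it suffices to control the disagreement between the final profile $s''$ and $s^{ad}$, i.e.\ to bound $w(\F_{bad})$ and $|R_\on|$ exactly as in Theorem~\ref{PSAoutcome-1}, now evaluated at $s''$, where the agents that did not commit to $s^{ad}$ form a Nash equilibrium given the others.

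The analog of Lemma~\ref{PSAEf-1} carries over verbatim: any set uncovered in $s''$ but covered in $s^{ad}$ contains an \off\ agent of $L$, and such an agent must be best-responding (an $L$-agent that commits to $s^{ad}$ plays \on), so $w(\F_{bad})\le c(L)$. The heart is the analog of Lemma~\ref{PSARf-1}, where the only change is the estimate of $\Pr[\ell\text{ is \off}]$ for $\ell\in L$. If $\ell$ is \off, then at most $\lfloor c_{\max}/w_{\min}\rfloor$ of the sets in $\F_\ell^*$ have all their $R$-agents \off; passing to a conflict-free subfamily $\widehat\F_\ell^*\subseteq\F_\ell^*$ with $|\widehat\F_\ell^*|\ge|\F_\ell^*|/((F_{\max}-1)\Delta_2)$ as in Lemma~\ref{PSARf-1}, the events ``all of $\rho\cap R$ are \off'' for $\rho\in\widehat\F_\ell^*$ become mutually independent after conditioning on the update sequence --- each such $R$-agent is \off\ because its most recent prior update played its advertised action $s^{ad}_r=\off$, an event of probability at least $\beta$ --- so each event has probability at least $\beta^{F_{\max}}$. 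Hence
\[
\Pr[\ell\text{ is \off}]\ \le\ \sum_{i=0}^{\lfloor c_{\max}/w_{\min}\rfloor}\binom{|\widehat\F_\ell^*|}{i}\left(\beta^{F_{\max}}\right)^{i}\left(1-\beta^{F_{\max}}\right)^{|\widehat\F_\ell^*|-i},
\]
which has the same form as~\eqref{eq:PSARf-1}, so Proposition~\ref{tech} gives $\E[|\F_\off^*|]=O(\Delta_2)\,|L|$; combined with $|\F_\off\setminus\F_\off^*|\le\Delta_2|L|^2$ and $|R_\on^{(1)}|\le|\F_R|$ this yields the bound on $\E[|R_\on|]$. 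Assembling as in Theorem~\ref{PSAoutcome-1} and using $\texttt{cost}(s^{ad})=\Omega(|L|+|\F_R|)$ (from~\eqref{eq:assume}) gives $\E[\texttt{cost}(s'')]=O(\Delta_2^2)\,\texttt{cost}(s^{ad})^2$ when $F_{\max}=O(1)$ and $O(1)\cdot\texttt{cost}(s^{ad})$ when $F_{\max}=2$ (where $\Delta_2=1$).

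The main obstacle --- and the reason for the extra factor of $\Delta_2$ relative to the PSA bound --- is that the state at time $T^*$, unlike $s'$ in the PSA model, is not a Nash equilibrium, so the ``$r$ is best-responding'' and ``$\ell$ is best-responding'' facts used to locate the witness sets $\sigma_r$ and to pin down $\F_\off$ hold only at each agent's own stale last-update time, not simultaneously. Making the argument rigorous therefore requires (i) taking $T^*=\mathrm{poly}(n)$ so that every agent's last update is recent enough that its stale best response is still essentially valid at $T^*$ and survives the phase-2 dynamics, and (ii) handling the commitment step without assuming which of the two options each agent picks; reconciling the stale phase-1 snapshots with the committed profile $s''$ and carrying the set-cover witnesses across is exactly where one pays the additional $\Delta_2$.
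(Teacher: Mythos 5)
Your Phase~1 analysis (condition on a polynomially likely event about the update order, then rerun the argument of Lemma~\ref{PSARf-1} with ``last update before $T^*$ played $s^{ad}$'' in place of ``receptive'', so $\beta$ replaces $\alpha$) is essentially the paper's Lemma~\ref{LDPh1}, and that part is sound. The gap is in your treatment of Phase~2, in two related places. First, you propose to bound $w(\F_{bad})$ and $|R_\on|$ ``evaluated at $s''$'' by the PSA computation, but the probabilistic estimate $\Pr[\text{all of }\rho\cap R\text{ are \off}]\ge\beta^{F_{\max}}$ is only available for the Phase~1 state: at time $T^*$ each agent commits \emph{arbitrarily} to $s^{ad}_i$ or best response, so there is no $\beta$-probability lower bound on an $R$-agent being \off\ in $s''$, and the independence structure you invoke is gone. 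This is precisely why the paper bounds $\texttt{cost}(s')$ with the Phase~1 randomness and then separately controls the \emph{potential increase} during Phase~2.

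Second, your accounting of the commitment moves is wrong where it matters. When $r\in R_\on$ commits to \off, the potential jumps by the weight of every set containing $r$ as the unique \on\ element \emph{at that moment}. Such a set can contain some $\ell\in L$ that happens to be \off\ when $r$ flips but is \on\ in $s''$ (say $\ell$ commits to $s^{ad}_\ell=\on$ later in the random order). That set is covered in both $s^{ad}$ and $s''$, so it lies in neither $\F_R$ nor $\F_{bad}$, and your charge ``either already uncovered in $s^{ad}$ or charged to $\F_{bad}$'' misses it entirely; the transient potential increase is nevertheless real and is not undone when $\ell$ later turns \on. Bounding the expected total weight of these transiently uncovered sets is the actual content of the paper's Lemma~\ref{LDstay}: one shows, using only the uniformly random order of Phase~2 updates, that for such a set $\sigma$ with witness $\ell_\sigma$, $\Pr[\sigma\in\F_r\mid\mathcal E]\le(\lceil c_{\max}/w_{\min}\rceil+1)/(|\F_{\ell_\sigma,R_\off}|/\Delta_2+1)$, and then integrates against the (binomially dominated) distribution of $|\F_{\ell_\sigma,R_\off}|$ via $\E[1/(1+Y)]\le 1/(np)$. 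That computation is where the extra factor of $\Delta_2$ (hence the $\Delta_2^2$ in the statement) comes from --- not from reconciling stale Phase~1 best responses, as your closing paragraph suggests. Without this argument the theorem's Phase~2 bound is unproved.
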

Theorem \ref{LTDoutcome-1} implies that if $s^{ad}$ is obtained from
the $O(F_{\max})$-approximation poly-time algorithm described in
Section \ref{sec:defsng}, the following corollary holds.
\begin{corollary}\label{cor1}
There exists a poly-time algorithm to find an advertising strategy
$s^{ad}$ for the LTD model such that
%this theorem implies% convergence to a state with cost
\begin{equation} \E[\texttt{cost}(s'')]\leq \begin{cases}
O(\Delta_2^2)\cdot OPT^2&\mbox{if}~F_{\max}=O(1)\\O(1)\cdot OPT
&\mbox{if}~F_{\max}=2\end{cases}.\notag\end{equation}
\end{corollary}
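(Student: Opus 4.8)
The plan is to follow the template of the proof of Theorem~\ref{PSAoutcome-1}, replacing the one-shot receptivity of PSA by a \emph{last-update} argument adapted to the LTD dynamics, and then separately accounting for the adversarial commitment step of Phase~2; the extra factor of $\Delta_2$ (as compared with Theorem~\ref{PSAoutcome-1}) comes out of that second step.

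\textbf{Reducing Phase~1 to the PSA analysis.} I would take $T^* = \Theta(n^2\log n) \in \mathrm{poly}(n)$. By a coupon-collector/Chernoff estimate, with probability $1-n^{-2}$ every agent updates $\Omega(n\log n)$ times during Phase~1, so every agent's last update lies in the final $\Theta(n\log n)$ steps; condition on this (it costs only $o(\texttt{cost}(s^{ad}))$ in expectation). The point is that $s'_i = s^{ad}_i$ whenever $i$'s last update chose ``advertise'' --- a coin of bias $p_i \ge \beta$ that, given the update schedule, is independent across agents --- and otherwise $s'_i$ is the best response of $i$ against the state at that update; so $s'_i \ne s^{ad}_i$ forces $i$'s last update to have been a best-response move against $s^{ad}_i$. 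I would then argue that for this polynomially large $T^*$, with probability $1-n^{-2}$ the state $s'$ is a best response for all but $o(n)$ agents, so up to an additive $o(\texttt{cost}(s^{ad}))$ I may treat $s'$ as though it were a Nash equilibrium of the residual game in which the currently-advertising agents are fixed --- precisely the situation analyzed in Phase~1 of PSA.

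\textbf{The Phase-1 estimates.} With $s'$ in this form I reuse the PSA notation: $L,R$ for the $\on$/$\off$ sets of $s^{ad}$; $\F_R$ for the sets uncovered by $s^{ad}$; $L_\off,R_\on$ for the disagreement sets; $\F_{bad}$ for the sets outside $\F_R$ uncovered in $s'$. The bound $w(\F_{bad}) \le c(L)$ of Lemma~\ref{PSAEf-1} carries over, since each set in $\F_{bad}$ contains an (essentially best-responding) $L_\off$-agent. For $R_\on$ I use the split $R_\on = R_\on^{(1)}\cup R_\on^{(2)}$ of Lemma~\ref{PSARf-1}: $|R_\on^{(1)}| \le |\F_R|$, and for $R_\on^{(2)}$ the event ``$\ell$ is $\off$'' is controlled by the same binomial tail as in~\eqref{eq:PSARf-1} with $\alpha$ replaced by $\beta$, the independence coming from a subcollection $\widehat\F_\ell^* \subseteq \F_\ell^*$ with pairwise-disjoint $R$-parts and $|\F_\ell^*| \le (F_{\max}-1)\Delta_2|\widehat\F_\ell^*|$. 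Proposition~\ref{tech} gives $\E[|\F_\off^*|] = O(\Delta_2)\,|L|$, and combined with $|\F_\off\setminus\F_\off^*| \le \Delta_2|L|^2$ and $|L|,|\F_R| = O(\texttt{cost}(s^{ad}))$ from~\eqref{eq:cost} and~\eqref{eq:assume} this gives $\E[\texttt{cost}(s')] = O(\Delta_2)\cdot\texttt{cost}(s^{ad})^2$ when $F_{\max}=O(1)$ and $O(\texttt{cost}(s^{ad}))$ when $F_{\max}=2$ (where $\Delta_2=1$).

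\textbf{Phase~2 and the main obstacle.} Phase~2 is the hard part. Agents commit \emph{arbitrarily} to $s^{ad}_i$ or to best response, so (i) a committed $R$-agent that was $\on$ in $s'$ can switch $\off$ and expose many sets it alone was covering --- a single such move can raise $\Phi$ by $\Omega(n)$ --- and (ii) the potential can similarly jump during Phase~1 advertise moves, so the ``$\Phi$ never increases under best response'' shortcut does not by itself control $\texttt{cost}(s'')$ in terms of $\texttt{cost}(s')$. I would instead bound $\texttt{cost}(s'') = O(\Phi(s''))$ directly: after the commitments, $\Phi$ is non-increasing along the best-response moves of the non-committed agents, so it suffices to charge the $\on$ agents and uncovered sets of the final state $s''$ itself. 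The $s^{ad}$-following $\on$ agents contribute at most $c(L)$; each best-responding $\on$ agent owns a witness set disjoint from the others', and that set is either in $\F_R$ or contains an $L$-agent that is $\off$ in $s''$ (hence a best-responder with at most $c_{\max}/w_{\min}$ uncovered incident sets), and charging these witness sets to the offending $L$-agents costs a further factor of $\Delta_2$ because each pair of agents lies in at most $\Delta_2$ common sets. This is exactly the gap between the $O(\Delta_2)$ of Theorem~\ref{PSAoutcome-1} and the $O(\Delta_2^2)$ here; for $F_{\max}=2$ all these overheads are constant and $\E[\texttt{cost}(s'')] = O(\texttt{cost}(s^{ad}))$. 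The step I expect to fight hardest over is making the Phase-1 burn-in precise --- showing that a polynomially large $T^*$ really does push $s'$ to within $o(n)$ agents of an equilibrium of the residual game, in the presence of advertise moves that can temporarily spike the potential.
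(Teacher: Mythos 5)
There are two genuine gaps. First, and most directly, you have written a proof of (something like) Theorem~\ref{LTDoutcome-1}, not of the corollary: the corollary asserts the existence of a \emph{poly-time algorithm} producing an $s^{ad}$ for which the right-hand side is stated in terms of $OPT$ rather than $\texttt{cost}(s^{ad})$. In the paper this corollary is a one-line consequence of Theorem~\ref{LTDoutcome-1} combined with the LP-rounding procedure of Section~\ref{sec:defsng}, whose output satisfies $\texttt{cost}(s^{ad})\leq F_{\max}\lceil c_{\max}/w_{\min}\rceil\cdot OPT=O(1)\cdot OPT$ under assumption~\eqref{eq:assume}. You never construct or invoke any such algorithm, so nothing in your argument converts $\texttt{cost}(s^{ad})^2$ into $OPT^2$; that step is simply absent.

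Second, your Phase~2 analysis does not go through as described. You propose a static charging argument on the final state $s''$: each best-responding \on\ agent owns a disjoint witness set, and a witness set not in $\F_R$ contains an $L$-agent that is \off\ in $s''$, which you then charge using the best-response condition of that $L$-agent. But the best-response condition for an \off\ agent $\ell$ only controls the total weight of \emph{uncovered} sets containing $\ell$ (at most $c_\ell$), whereas the witness sets you are charging to $\ell$ are \emph{covered} (each has its best-responding $R$-agent as the unique \on\ element). Nothing bounds the number of distinct best-responding \on\ neighbors of a single \off\ $L$-agent, so the per-$\ell$ charge is unbounded and the claimed extra factor of $\Delta_2$ does not materialize from "each pair of agents lies in at most $\Delta_2$ common sets." The paper's Lemma~\ref{LDstay} avoids this by exploiting the \emph{random order} of updates in Phase~2: it shows $\Pr[\sigma\in\F_r\mid\mathcal E]\leq(\lceil c_{\max}/w_{\min}\rceil+1)/(|\F_{\ell_\sigma,R_\off}|/\Delta_2+1)$ and then uses stochastic dominance of $|\F_{\ell,R_\off}|$ over a binomial together with $\E[1/(1+Y)]\leq 1/(np)$; this probabilistic mechanism is essential and has no counterpart in your worst-case charging. (Your Phase~1 reduction is also looser than the paper's, which conditions on the explicit ordering event $\mathcal E(T',T^*)$ rather than on $s'$ being "within $o(n)$ agents of an equilibrium," a claim that is hard to even formalize since agents re-randomize at every update; but this part is at least in the right spirit.)
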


\subsection*{Proof of Theorem \ref{LTDoutcome-1}}\label{sec:LTDoutcome-1}
To begin with, we note that while LTD differs from PSA in both
phases, the proof that cost is low in Phase 1 of LTD is very similar
to the proof of Theorem \ref{PSAoutcome-1}. However, showing that
the cost stays low in Phase 2 imposes new challenges.

We will use the same notation as in the proof of Theorem
\ref{PSAoutcome-1}. We first define $\mathcal E=\mathcal E(T',T^*)$
for $1<T'< T^*$ as the event that every element in $L$ updates at
least once before time $T'$ {\em after} every element in $R$ has
updated at least once, and then every element in $R$ again updates
at least once at some time $t\in[T',T^*]$. Clearly there exist some
$T',T^*\in poly(n)$ such that $\mathcal E=\mathcal E(T',T^*)$
happens with the following high probability, i.e.\
$$\Pr[\mathcal E]~\geq~1-\frac1{n^{F_{\max}}}.$$
Then, we have
\begin{eqnarray}
E[\texttt{cost}(s'')]&=& \Pr[\mathcal E]\cdot
E[\texttt{cost}(s'')~|~\mathcal E]+ \Pr[\mathcal E^c]\cdot
E[\texttt{cost}(s'')~|~\mathcal E^c]\notag\\
&\leq&E[\texttt{cost}(s'')~|~\mathcal E]+ \frac1{n^{F_{\max}}}
\cdot O\left(n^{F_{\max}}\right)\notag\\
&=&E[\texttt{cost}(s'')~|~\mathcal E]+ O(1),\label{eq9}
\end{eqnarray}
where for the inequality we use the fact that the social cost is
always bounded above by $c_{\max} \cdot n + F_{\max}\cdot |\F|=
O\left(n^{F_{\max}}\right)$ from \eqref{eq:cost} and
\eqref{eq:assume}.

Therefore, it suffice to bound $E[\texttt{cost}(s'')~|~\mathcal E]$,
where our choice of $T^*$ is primarily for guaranteeing that
$\mathcal E$ happens with such a high probability. We first bound
the expected social cost at the end of Phase 1 under the event
$\mathcal E$ as below. And later, we will bound the increase in the
social cost in Phase 2.
\begin{lemma}\label{LDPh1}

\begin{equation*}
E[\texttt{cost}(s')~|~\mathcal E]\leq
\begin{cases}
O(1)\cdot\texttt{cost}(s^{ad})
&\mbox{if}~F_{\max}=2\\
O(\Delta_2)\cdot\texttt{cost}(s^{ad})^2 &\mbox{if}~F_{\max}=O(1)
\end{cases}.
\end{equation*}
\iffalse
\begin{equation}
%\E[\texttt{cost}(s')\mid \mathcal E]\leq
\begin{cases}
c(L)+(w_{\max}+c_{\max})\cdot\left(\Delta_2\cdot |L|^2+O(\Delta_2)\cdot |L|\cdot \left\lceil\frac{c_{\max}}{w_{\min}}\right\rceil\right) +c_{\max}\cdot|\F_R| &\mbox{if}~F_{\max}=O(1)\\
c(L)+(w_{\max}+c_{\max})\cdot\left(O(\Delta_2)\cdot |L|\cdot
\left\lceil\frac{c_{\max}}{w_{\min}}\right\rceil\right)
+c_{\max}\cdot|\F_R| &\mbox{if}~F_{\max}=2
\end{cases}
\end{equation}
\fi
\end{lemma}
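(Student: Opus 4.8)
The plan is to retrace the proof of Theorem~\ref{PSAoutcome-1}, carrying the conditioning on $\mathcal E$ through every estimate and compensating for the fact that $s'$ is now merely the time-$T^*$ state, not a Nash equilibrium, by reading each agent's behaviour off at its \emph{last update before $T^*$} rather than off a stable profile. Keep the notation $L,R,L_\off,R_\on,\F_R,\F_{bad}$ from the proof of Theorem~\ref{PSAoutcome-1}. From \eqref{eq:cost}, $F_{\max}=O(1)$, the assumption \eqref{eq:assume}, and $\texttt{cost}(s^{ad})=\Omega(|L|)+\Omega(|\F_R|)$ one gets, deterministically,
\[
\texttt{cost}(s')=O(\texttt{cost}(s^{ad}))+O(|R_\on|)+O(|\F_{bad}|).
\]
Since $s'$ is not an equilibrium I cannot use the deterministic bound $w(\F_{bad})\le c(L)$ of Lemma~\ref{PSAEf-1}, so I bound $|R_\on|$ and $|\F_{bad}|$ together in expectation given $\mathcal E$. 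As in Lemma~\ref{PSARf-1}, each $r\in R_\on$ played best response at its last update, hence is the unique \on\ element of a witness set $\sigma_r$ all of whose other elements are \off\ at that update, and one checks $r\mapsto\sigma_r$ is injective (the $r$'s are \on\ from their last updates onward and those times are distinct); each $\sigma\in\F_{bad}$ is uncovered in $s'$, so its nonempty set of $L$-vertices lies in $L_\off$. Splitting $R_\on=R_\on^{(1)}\cup R_\on^{(2)}$ by whether $\sigma_r\in\F_R$ gives $|R_\on^{(1)}|\le|\F_R|$, and $R_\on^{(2)}$ together with $\F_{bad}$ is charged, as in \eqref{eq:PSARf-6}, to the $L$-vertices that are \off\ at the relevant update, plus a crude $O(\Delta_2)|L|^2$ term for sets carrying two or more such vertices (this term vanishes when $F_{\max}=2$). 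Everything thus reduces to controlling sums $\sum_{\ell\in L}|\F_\ell^*|\cdot\Pr[\ell\text{ is \off\ at }t\mid\mathcal E]$ over the finitely many relevant times $t\in\{T^*\}\cup\{\tau_r^-:r\in R\}$, where $\tau_r$ is the last update time of $r$ and $\F_\ell^*$ is the family of sets in which $\ell$ is the unique $L$-vertex.

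The core new estimate is: for every $\ell\in L$ and every $t$ at least as large as the update time of $\ell$ guaranteed by $\mathcal E$,
\[
\Pr[\ell\text{ is \off\ at }t\mid\mathcal E]\ \le\ \sum_{i=0}^{\lfloor c_{\max}/w_{\min}\rfloor}\binom{|\widehat\F_\ell^*|}{i}\bigl(1-\beta^{F_{\max}}\bigr)^{|\widehat\F_\ell^*|-i}\bigl(\beta^{F_{\max}}\bigr)^i,
\]
the same binomial tail as \eqref{eq:PSARf-1}. To prove it I would condition on the full update schedule (which agent updates at which step), restricted to schedules consistent with $\mathcal E$; the only remaining randomness is then the receptive-or-best-response coin at each update. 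Fix such a schedule and pick $\widehat\F_\ell^*\subseteq\F_\ell^*$ with pairwise disjoint $R$-parts and $|\F_\ell^*|\le(F_{\max}-1)\Delta_2|\widehat\F_\ell^*|$, exactly as in Lemma~\ref{PSARf-1}. Let $\tau\le t$ be $\ell$'s last update not exceeding $t$; by the structure of $\mathcal E$ it happens after every element of $R$ has updated at least once, so each $R$-vertex $r$ occurring in some $\rho\in\widehat\F_\ell^*$ has a well-defined last update $u_r<\tau$. If $\ell$ is \off\ at $t$ it played best response at $\tau$ (being in $L$, it is never receptive), so at most $\lfloor c_{\max}/w_{\min}\rfloor$ sets in $\F_\ell^*$ are uncovered by the other players at $\tau$; a fortiori at most $\lfloor c_{\max}/w_{\min}\rfloor$ sets of $\widehat\F_\ell^*$ have all their $R$-vertices \off\ just before $\tau$. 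For $\rho\in\widehat\F_\ell^*$ let $Y_\rho$ be the event that every $r\in\rho\cap R$ is receptive at $u_r$; then $Y_\rho$ forces each such $r$ to play \off\ at $u_r$ and, having no further update before $\tau$, to be \off\ just before $\tau$, so $\rho$ is uncovered there. Hence ``$\ell$ \off\ at $t$'' implies that at most $\lfloor c_{\max}/w_{\min}\rfloor$ of the $Y_\rho$ occur. Since distinct $\rho$ have disjoint $R$-parts the $Y_\rho$ depend on disjoint coins, hence are independent, each with probability $\prod_{r\in\rho\cap R}p_r\ge\beta^{F_{\max}}$; a stochastic-domination step gives the displayed tail, which is preserved on averaging over schedules.

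With this estimate the rest is the bookkeeping of Lemma~\ref{PSARf-1}: substituting into $\sum_\ell|\F_\ell^*|\cdot(\text{tail})$, using $|\F_\ell^*|\le(F_{\max}-1)\Delta_2|\widehat\F_\ell^*|$ and Proposition~\ref{tech} gives $O(\Delta_2)|L|$; adding the $O(\Delta_2)|L|^2$ crude term and $|\F_R|$ yields $\E[|R_\on|+|\F_{bad}|\mid\mathcal E]=O(\Delta_2)|L|^2+O(|\F_R|)=O(\Delta_2)\cdot\texttt{cost}(s^{ad})^2$, hence $\E[\texttt{cost}(s')\mid\mathcal E]=O(\Delta_2)\cdot\texttt{cost}(s^{ad})^2$ for $F_{\max}=O(1)$. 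For $F_{\max}=2$ one has $\Delta_2=1$, no set carries two $L$-vertices together with an $R$-vertex (so the $R_\on^{(2)}$ crude term disappears), and a size-two set with both endpoints in $L_\off$ is charged $O(1)$ per endpoint by a ``later-last-update'' argument (the endpoint updating last best-responded \off\ with at most $\lfloor c_{\max}/w_{\min}\rfloor$ uncovered edges, one of which is that set), leaving $\E[\texttt{cost}(s')\mid\mathcal E]=O(1)\cdot\texttt{cost}(s^{ad})$. I expect the real obstacle to be the core estimate above: PSA got independence for free from a \emph{fixed} receptive set, whereas here it must be synthesized from the three-stage structure of $\mathcal E$ (all of $R$ updates, then all of $L$, then all of $R$ again) --- the first stage makes the coins $u_r$ available before $\tau$, and the last stage forces each $\tau_r$ past $L$'s guaranteed updates so the tail bound legitimately applies at $t=\tau_r^-$ --- and the delicate point is making the double conditioning on $\mathcal E$ and on the schedule commute cleanly with these per-vertex events.
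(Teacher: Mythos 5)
Your proof follows the same route as the paper's: reduce to bounding $\E[|R_\on|\mid\mathcal E]$ and $\E[w(\F_{bad})\mid\mathcal E]$, split sets according to how many $L$-vertices they contain (absorbing the multi-$L$ sets into the $\Delta_2\cdot|L|^2$ term, which vanishes for $F_{\max}=2$), and rerun the binomial-tail estimate of Lemma \ref{PSARf-1} with $\beta$ in place of $\alpha$, finishing with Proposition \ref{tech}. The only place you go beyond the paper is in actually justifying why that tail bound survives the sequential LTD dynamics --- conditioning on the update schedule, reading each agent's state off its last update, and using the staged structure of $\mathcal E$ to recover independence of the per-set receptiveness events --- a step the paper simply asserts ``works identically''; your elaboration of it is correct.
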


\begin{proof}

Similarly as in the proof of Theorem \ref{PSAoutcome-1}, we again
note that
\begin{eqnarray}
\texttt{cost}(s')&=& \texttt{cost}(s^{ad})+O(|R_\on|) +
O(w(\F_{bad}))\label{eq0-0}\\\texttt{cost}(s^{ad})&=& \Omega( |L|) +
\Omega(|\F_R|).\label{eq0-1}
\end{eqnarray} We again remind that we will use the same notation
as in the proof of Theorem \ref{PSAoutcome-1}.

Hence, it suffices to bound $w(\F_{bad})$ and $|R_\on|$ in terms of
$|L|$ and $|\F_R|$. First consider $w(\F_{bad})$. We separately
analyze the weights of two types of $\F_{bad}$. First consider a set
in $\F_{bad}\cap 2^L$, i.e.\ a set consisting only of elements in
$L_\off$. Suppose we attribute the weight of such a set to its
element $\ell$ that updated most recently before the end of Phase 1.
Because $\ell\in L_\off$ played best response most recently, the
weight of all sets in $\F_{bad}\cap 2^L$ attributed to $\ell$ is at
most $c_\ell$. Summing over all $\ell\in L_\off\subseteq L$ gives
\begin{equation}
w(\F_{bad}\cap 2^L)\leq c(L)=O(|L|).\label{eq1}
\end{equation}

Now consider a set in $\F_{bad}\backslash 2^L$, i.e.\ a set which
has elements in both $L_\off$ and $R$ and all of them are \off\ at
the end of Phase 1. By definition of $\mathcal F_\off$,
$\F_{bad}\backslash 2^L\subset \mathcal F_\off$. Under assuming the
event $\mathcal E$, the proof arguments to bound $|\mathcal F_\off|$
in the proof of Lemma \ref{PSARf-1} identically work in the LTD
model (using $\beta$ instead of $\alpha$), i.e.\ we have
\begin{equation}\label{eq2}
E\left[\left|\F_{bad}\backslash 2^L\right|~|~\mathcal E\right]\leq
E\left[|\mathcal F_\off|~|~\mathcal E\right]=
\begin{cases}
\Delta_2\cdot |L|^2+O(\Delta_2)\cdot |L| &\mbox{if}~F_{\max}=O(1)\\
O(|L|) &\mbox{if}~F_{\max}=2
\end{cases}.
\end{equation}
From \eqref{eq1} and \eqref{eq2}, it follows that
\begin{equation}\label{eq3}
E\left[w(\F_{bad})~|~\mathcal E\right]=
\begin{cases}
O(\Delta_2\cdot |L|^2)+O(\Delta_2)\cdot |L| &\mbox{if}~F_{\max}=O(1)\\
O(|L|) &\mbox{if}~F_{\max}=2
\end{cases}.
\end{equation}

Now under assuming event $\mathcal E$, one can observe that the
conclusion and proof strategy of Lemma \ref{PSARf-1} also works for
$|R_\on|$ in the LTD model, i.e.\
\begin{equation}\label{eq4}
E\left[|R_\on|~|~\mathcal E\right]=
\begin{cases}
|\F_R|+O(\Delta_2\cdot |L|^2)+O(\Delta_2)\cdot |L| &\mbox{if}~F_{\max}=O(1)\\
|\F_R|+O(|L|) &\mbox{if}~F_{\max}=2
\end{cases}.
\end{equation}

Therefore, combining these bounds \eqref{eq0-0}, \eqref{eq0-1},
\eqref{eq3} and \eqref{eq4} leads to the desired bound of Lemma
\ref{LDPh1}.
\end{proof}
%Observe that the bounds on cost at the end of Phase 1 in Lemma \ref{LDPh1} are the same as the desired bounds on cost at the end of Phase 2 in Theorem \ref{LTDoutcome-1}.

We now bound the cost increase in Phase 2 assuming $\mathcal E$.
From \eqref{eq:phiandcost} and $F_{\max}=O(1)$ it suffices to
provide a bound on the expected increase in the potential function
throughout Phase 2, i.e.\
\begin{eqnarray}
\texttt{cost}(s'')&\leq&F_{\max}\cdot
\Phi(s'')+F_{\max}\cdot(-\Phi(s') +\texttt{cost}(s'))\notag\\
&=& O(\Phi(s'')-\Phi(s'))+O(\texttt{cost}(s')).\label{eq10}
\end{eqnarray}
The following lemma bounds the expected potential increase
$\Phi(s'')-\Phi(s')$ under assuming event $\mathcal E$. Finally,
combining \eqref{eq9}, \eqref{eq10}, Lemma \ref{LDPh1} and Lemma
\ref{LDstay} lead to the desired conclusion of Theorem
\ref{LTDoutcome-1}.

\begin{lemma}\label{LDstay} %For any $s^{ad}$, the potential increase in Phase 2 of LTD given event $\mathcal E$ is bounded by
\begin{align*}
\E[\Phi(s'')-\Phi(s')\mid\mathcal E]\leq \begin{cases}
O(\Delta_2^2)\cdot\texttt{cost}(s^{ad})^2 &\mbox{if}~F_{\max}=O(1)\\
O(1)\cdot\texttt{cost}(s^{ad}) &\mbox{if}~F_{\max}=2
\end{cases}.\end{align*}
\end{lemma}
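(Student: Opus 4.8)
The plan is to bound $\Phi(s'')$ directly; since $\Phi(s')\ge 0$ this bounds $\Phi(s'')-\Phi(s')$ from above, and it is the right object because best-response moves in Phase 2 only decrease $\Phi$, so all the growth must be visible in the final state. Write $\Phi(s'')=c(\mbox{ON}(s''))+w(\F^u(s''))$. For the uncovered-set term, take $\sigma\in\F^u(s'')$: if $\sigma\in\F_R$ it is one of at most $|\F_R|=O(\texttt{cost}(s^{ad}))$ sets; otherwise $\sigma$ contains an agent of $L$, and that agent is \off\ in $s''$, so it has committed to best response and satisfies its best-response inequality, letting us charge $w_\sigma$ to its cost, for a total over $L$ of $O(c(L))=O(\texttt{cost}(s^{ad}))$. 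Hence $w(\F^u(s''))=O(\texttt{cost}(s^{ad}))$. For the \on-cost term, the \on\ agents of $s''$ are all of $L\cap A$, some of $L\cap B$ (together costing at most $2c(L)=O(\texttt{cost}(s^{ad}))$), and some of $R\cap B$; the last set is where the difficulty lies.

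Every agent of $\mbox{ON}(s'')\cap R\cap B$ is best responding, hence is the unique \on\ agent of a nonempty family of sets, and these families are pairwise disjoint, so $c(\mbox{ON}(s'')\cap R\cap B)\le w(\mathcal H)$ where $\mathcal H$ is their union. I would split $\mathcal H$ into three parts: sets lying in $\F_R$ (weight $O(\texttt{cost}(s^{ad}))$); sets containing at least two agents of $L$ (at most $\Delta_2|L|^2$ of them, since each pair of agents lies in $\le\Delta_2$ sets, so weight $O(\Delta_2)\cdot|L|^2=O(\Delta_2)\cdot\texttt{cost}(s^{ad})^2$); and sets containing exactly one agent $\ell\in L$, necessarily \off\ in $s''$, that are covered in $s''$ solely by an agent of $R$. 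This third family is the crux: such a set is guarded by $\ell$ alone in $s^{ad}$ yet handed off to a single $R$-agent in $s''$, a wasteful configuration that the Nash property alone cannot preclude.

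To control the third family one must use how Phase 2 runs, not just what $s''$ is. Track $\Phi$ through Phase 2: best-response moves never raise it; an $A$-agent affects it only on its first turn, an $L$-agent snapping \on\ by at most its cost (total $O(c(L))$), and an $R$-agent snapping \off\ by the weight of the sets it then uniquely covers --- and such an agent was \on\ at the end of Phase 1, hence lies in $R_\on$, whose expected size under $\mathcal E$ is $O(\Delta_2|L|^2+|\F_R|)=O(\Delta_2)\cdot\texttt{cost}(s^{ad})^2$ by the analysis in Lemma \ref{LDPh1}. The sets so abandoned, over all these $R$-agents, form a disjoint union; each such set either stays uncovered in $s''$ (already counted in $w(\F^u(s''))$), lies in $\F_R$, or gets re-covered, and for the re-covered ones a charging argument that pairs each set's unique $L$-agent with the $R$-agent that abandoned it --- losing a factor $\Delta_2$ from the at-most-$\Delta_2$ sets per pair --- reduces the count to $O(\Delta_2)\cdot|R_\on|$, yielding the claimed $O(\Delta_2^2)\cdot\texttt{cost}(s^{ad})^2$; this extra factor of $\Delta_2$ relative to Phase 1 is exactly the cost of the hand-off. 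I expect the genuinely hard point to be making this last step rigorous: re-covering can cascade --- an $R$-agent turning \on\ frees an $L$-agent to turn \off, which can force a further $R$-agent \on, and so on --- so one must show the entire cascade is dominated by the Phase-1 quantities $|R_\on|$ and $w(\F^u(s'))$, both polynomially bounded conditioned on $\mathcal E$. When $F_{\max}=2$ we have $\Delta_2=1$ and no sets with two agents of $L$, so the multi-$L$ term vanishes and the whole bound collapses to $O(\texttt{cost}(s^{ad}))$. Combining the resulting bound on $\E[\Phi(s'')-\Phi(s')\mid\mathcal E]$ with Lemma \ref{LDPh1} and \eqref{eq9}--\eqref{eq10} completes the proof of Theorem \ref{LTDoutcome-1}.
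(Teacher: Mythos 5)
There is a genuine gap, and it sits exactly where you flagged it. Your overall skeleton matches the paper's: only the $s^{ad}$-followers' single commitment moves can raise $\Phi$ (so there is in fact no cascade to worry about in the potential-tracking view --- each increase is attributed to one agent's one move), the $L$-agents turning \on\ contribute $O(c(L))$, and the $R_\on$-agents turning \off\ contribute the weight of the sets they abandon, which you correctly split into sets in $\F_R$, sets meeting $L$ in two or more agents (at most $\Delta_2|L|^2$), and the problematic sets with a unique $L$-element that has already turned \off. But your treatment of this last family does not work. Charging each such set to the pair $(\ell_\sigma,r)$ and losing a factor $\Delta_2$ gives at best $\Delta_2\cdot|L|\cdot|R_\on|$, since a single $r\in R_\on$ can abandon sets meeting many different $\ell\in L$; combined with $\E[|R_\on|\mid\mathcal E]=O(\Delta_2)\cdot\texttt{cost}(s^{ad})^2$ this is \emph{cubic} in $\texttt{cost}(s^{ad})$, not quadratic, and the claimed reduction to $O(\Delta_2)\cdot|R_\on|$ is asserted without justification. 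The best-response inequality at the moment $\ell_\sigma$ turns \off\ does not cap the number of such hand-offs either, because at that moment these sets are still covered by their $R_\on$-elements and so do not count against $c_{\ell_\sigma}$.

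The missing idea is that this step must use the \emph{randomness of the Phase 2 update order}, not a deterministic charge. The paper's argument is: for a type-(a) set $\sigma$ with unique \on\ $L$-element $\ell_\sigma$ to end up abandoned by $r$, the agent $\ell_\sigma$ must turn \off\ before $r$ does, which (by $\ell_\sigma$'s best-response condition) forces all but $\lceil c_{\max}/w_{\min}\rceil$ of the sets in $\F_{\ell_\sigma,R_\off}$ to have had an $R$-element update before $r$. Extracting a subfamily with disjoint $R$-parts of size at least $|\F_{\ell_\sigma,R_\off}|/\Delta_2$, a uniformly random order makes this event have probability at most $\bigl(\lceil c_{\max}/w_{\min}\rceil+1\bigr)\big/\bigl(|\F_{\ell_\sigma,R_\off}|/\Delta_2+1\bigr)$; conditioning on $\mathcal E$, the size $|\F_{\ell,R_\off}|$ stochastically dominates a binomial $B\bigl(|\F_{\ell,R}|/((F_{\max}-1)\Delta_2),\beta^{F_{\max}}\bigr)$, and $\E[1/(1+Y)]\le 1/(np)$ then yields an expected type-(a) contribution of $O(\Delta_2^2\cdot|L|)$ --- linear in $|L|$, which is what makes the final bound $O(\Delta_2^2)\cdot\texttt{cost}(s^{ad})^2$ rather than cubic. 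Without some version of this probabilistic step your proof does not close.
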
 % low cost at end of Phase 1

\begin{proof}
Since best response moves do not increase the potential function
$\Phi$, we only consider updates of agents following the advertising
strategy $s^{ad}$ in Phase 2.
%Vertices that follow $s^{ad}$ in Phase 2 only
%change their strategy if they are not already playing $s^{ad}$.
Since each `$s^{ad}$ follower' changes strategies at most once in
Phase 2, it suffices to consider a single \off-\on\ move (following
$s^{ad}$) for each agent in $L$ and a single \on-\off\ move
(following $s^{ad}$) for each agent in $R_\on$. For each $\ell\in
L$, an \off-\on\ movie (i.e.\ $\ell$ changes his decision from \off\
to \on) increases potential by at most $c_\ell$. Hence,
\begin{equation}\label{eq5}
 \mbox{the total potential
increase by \on-\off\ moves is at most $c(L)=O(|L|)$.}\end{equation}

Now consider another type of moves, i.e.\ a single \on-\off\ move
for each agent in $R_\on$. For each $r\in R_\on$ that first turns
\off\ at time $t\geq T^*$, let $\F_r$ be the collection of sets
containing $r$ such that all of their other elements are \off\ at
time $t$. Then the potential increases by at most
$w(\F_r)=O(|\F_r|)$ at time $t$. Hence,
\begin{equation}\label{eq6} \mbox{the total potential
increase by \off-\on\ moves is at most $O(\sum_{r\in
R_\on}|\F_r|)=O\left(\left|\cup_{r\in R_\on}
\F_r\right|\right)$.}\end{equation}

We will bound the expectation of $|\cup_{r\in R_\on} \F_r|$. To this
end, we consider two types of set $\sigma$ including an element in
$R_\on$: (a) $\sigma$ has an element $\ell_\sigma\in L$ that was
\on\ at the end of Phase 1, and (b) otherwise. Observe that the
expected number of sets of type (b) which does not in $\F_R$ is
already bounded in the proof of Lemma \ref{LDPh1} by $|\F_\off|$
i.e.\
\begin{equation*}
E[\,\mbox{the number of sets in $\cup_{r\in R_\on} \F_r\setminus
\F_R$ of type (b)}~|~\mathcal E]=
\begin{cases}
\Delta_2\cdot |L|^2+O(\Delta_2)\cdot |L| &\mbox{if}~F_{\max}=O(1)\\
O(|L|) &\mbox{if}~F_{\max}=2
\end{cases}.
\end{equation*}
Therefore, we have
\begin{equation}\label{eq7}
E[\,\mbox{the number of sets in $\cup_{r\in R_\on} \F_r$ of type
(b)}~|~\mathcal E]=
\begin{cases}
|\F_R|+\Delta_2\cdot |L|^2+O(\Delta_2)\cdot |L| &\mbox{if}~F_{\max}=O(1)\\
|\F_R|+O(|L|) &\mbox{if}~F_{\max}=2
\end{cases}.
\end{equation}

Thus, we only focus on set $\sigma$ of type (a). Let
$\F_{\ell_\sigma,R_\off}$ be the sets containing $\ell_\sigma$ and
all of their elements in $R$ being \off\ at the end of Phase 1. Our
key observation here is that set $\sigma$ of type (a) will only
possibly become uncovered when an $r\in\sigma\cap R_\on$ turns \off\
if all but at most $\lceil c_{\max}/w_{\min}\rceil$ sets in
$\F_{\ell_\sigma,R_\off}$ have an element in $R$ that updates before
$r$ updates. Otherwise, $\ell_\sigma$ have too many uncovered sets
to turn \off\ before $r$ turns \off\ (hence, it remains \on). For an
arbitrary updating ordering of agents in $R\setminus\{r\}$, there
are at least $|\F_{\ell_\sigma,R_\off}|/\Delta_2$ elements that are
the first updating
agent in some set $\rho\in \F_{\ell_\sigma,R_\off}$. % to turn on
%since each such element cannot appear in too many sets with $\ell$.
Therefore, for each $r\in \sigma\cap R_\on$,
$$\Pr[\sigma\in \F_r~|~\mathcal E]\leq\frac{\ceil{ c_{\max}/w_{\min}}+1}{|\F_{\ell_\sigma,R_\off}|/\Delta_2+1}.$$
Using the union bound,
$$\Pr[\sigma\in \cup_{r\in R_\on} \F_r~|~\mathcal E]\leq F_{\max}\cdot \frac{\ceil{ c_{\max}/w_{\min}}+1}{|\F_{\ell_\sigma,R_\off}|/\Delta_2+1}.$$
Now let $\F_{\ell,R}\subseteq\F$ be the sets containing $\ell\in L$
and at least one element of $R$.  Note also that given $\mathcal E$,
random variable $D_\ell:=|\F_{\ell,R_\off}|$ has (first-order)
dominance over the binomial random variable $X \sim
B\left(\frac{|\F_{\ell,R}|}{(F_{\max}-1)\Delta_2},\beta^{F_{\max}}\right)$.
Using this, we have
\begin{align}
%\sum_{\ell\in L}|\F_{\ell,R}|\cdot\Pr[\sigma\in\F_{\ell,R}\text{ is uncovered in }s'']&\leq
E[\,\mbox{the number of sets in $\cup_{r\in R_\on} \F_r$ of type
(a)}~|~\mathcal E]&= \sum_{\ell\in
L}\sum_{\sigma\in\F_{\ell,R}}\E\left[\frac{\lceil
c_{\max}/w_{\min}\rceil +1}{D_\ell/\Delta_2+1}\right]\notag\\
&\leq \sum_{\ell\in L}\sum_{\sigma\in\F_{\ell,R}}O\left(\Delta_2\cdot\ceil{c_{\max}/w_{\min}}\right)\cdot \E\left[\frac{1}{D_\ell+1}\right]\notag\\
&\leq \sum_{\ell\in L}\sum_{\sigma\in\F_{\ell,R}}O\left(\Delta_2\cdot\ceil{c_{\max}/w_{\min}}\right)\cdot O\left(\frac{(F_{\max}-1)\Delta_2}{|\F_{\ell,R}|}\right)\notag\\
&=O\left(\Delta_2^2\cdot \ceil{c_{\max}/w_{\min}} \cdot|L|\right)\notag\\
&=O\left(\Delta_2^2\cdot|L|\right),\label{eq8}
\end{align}
where the second inequality uses the fact that
$\E[1/(1+Y)]\leq\frac1{np}$ for binomial random variable $Y\sim
B(n,p)$.

Finally, combining \eqref{eq5}, \eqref{eq6}, \eqref{eq7} and
\eqref{eq8} leads to the desired conclusion of Lemma \ref{LDstay},
where we remind that
$\texttt{cost}(s^{ad})=\Omega(|L|)+\Omega(|\F_R|)$ and $\Delta_2=1$
when $F_{\max}=2$.
\end{proof}
%Note that condition $(\star)$, which improved our bounds for PSA, does not seem applicable for LTD because not all agents are playing best response in Phase 2, and ensuring $L_\off=\emptyset$ only bounds cost on Phase 1.

\iffalse
\subsection{Smoothly adaptive learning}
Another model proposed in \cite{bbm-ics-10} is the smoothly adaptive
(SA) model. SA differs from LTD in that there is no commitment
phase, and it generalizes LTD in that agents smoothly adjust their
probabilities over time using arbitrary feedback or weight-updating
learning algorithms such as gradient descent \cite{WidrowHo60,Z03}
or multiplicative updating \cite{LLW91,LW94,CFHHSW97}, subject only
to a constraint $\beta$ on how small the initial probability of
following $s^{ad}$ can be and a constraint $\Delta$ on the amount by
which probabilities may change between any two time steps.  For any
polynomial $T>T^*$ as in LTD, there is a polynomially small $\Delta$
such that the Phase 1 results of LTD apply to the SA model at any
time $t\in[T^*,T]$. \fi

\section{Extension to Packing Games}\label{packing}

Notice that our covering games correspond to packing games if we simply redefine the
costs such that $i$ pays $c_i$ if he is \off\ and he pays the sum of the weights of
{\em fully-covered} sets he participates in if he is \on.  Roughly speaking, the game
strives to find a large packing, which is determined by the set of \on\ agents, while
avoiding fully covered sets.  Since we are simply relabeling actions, all the results from the previous sections apply.  The packing interpretation of this problem is easiest seen with the simple example where sets are of size 2, $c_i=c<1$ for all $i$, and $w_\sigma=1$ for all $\sigma$. In the original formulation of the problem, the sets of \on\ agents in Nash equilibria are minimal vertex covers.  In the new formulation, the sets of \on\ agents in Nash equilibria are maximal independent sets.  %To find a joint strategy to broadcast, we would hope to modify good approximations for the maximum independent set problem (or specific instances of it, such as those studied in \cite{maxind}) so they could apply to the more general packing game.

\section{Conclusions}\label{conclusions} %previously "PSA & OPTIMIZATION"
In recent years, game theoretic frameworks have provided informative
models for analyzing the outcomes of games among autonomous agents
or components programmed as autonomous agents. However, many games,
including those studied in this paper, often suffer from high Price
of Anarchy, meaning that without a central authority it is hard to
induce a state with low social cost. In this paper we study how weak
broadcasting signals from a central authority are enough to induce
states with low social cost in a general class of covering and
packing problems. In particular, we show that for any advertising
strategy $s^{ad}$, games with constantly bounded costs and weights
converge either in the public service advertising model of
\cite{BBM} or in the learn-then-decide model of \cite{bbm-ics-10} to
a state with cost $O(\texttt{cost}(s^{ad})^2)$. Moreover, in both
models we show convergence to a state of cost
$O(\texttt{cost}(s^{ad}))$ if all sets are of size 2. Furthermore,
for particular and poly-time computable $s^{ad}$ in the PSA model,
we guarantee convergence to a state within a $O(\log n)$ factor of
optimum for any game with sets of constant size.  We believe that
the techniques introduced in this paper to analyze covering and
packing games could be of broader interest for analyzing classic
optimization problems in a distributed fashion.

\subsection*{Acknowledgements}
This work was supported in part by ONR grant {N00014-09-1-0751}, by
AFOSR grant {FA9550-09-1-0538}, by NSF Career grant {CCF-0953192}.

% BIBLIOGRAPHY AND APPENDIX
\bibliographystyle{abbrv}
\bibliography{snbib1}

\appendix

\section{Proof of Proposition \ref{tech}}\label{sec:pftech}

The desired conclusion of Proposition \ref{tech} for the case $c<1$
follows since $d(1-a)^d=O(1)$ for all $d\geq 0$ as long as
$a\in(0,1)$ is constant. Hence, assume $c\geq1$. Let ${\bar
a}=\max(a,1-a)$ and define $\xi\in (0,1)$ to be the largest constant
satisfying
\[
(e/\xi)^\xi<\sqrt{1/{\bar a}}
\]
For each $\ell$, we have either $d\leq c/\xi$ or $d>c/\xi$. For the
case with $d\leq c/\xi$, observe that with $c\leq d$, the desired
expression is at most
\[
d\sum_{i=0}^{d} \binom{d}{i}(1-a)^{d-i}a^i=d=O(c)
\]
Now consider when $d>c/\xi$. Observe that
\begin{align*}
d\sum_{i=0}^{\floor{c}} \binom{d}{i}(1-a)^{d-i}a^i~\leq~ d \cdot
{\bar a}^{d}\sum_{i=0}^{\floor{c}} \binom{d}{i}  ~\leq~ d\cdot {\bar
a}^{d}\sum_{i=0}^{\floor{c}} \frac{d^i}{i!}
\end{align*}
Further, we have
\begin{align*}
d\cdot {\bar a}^{d}\sum_{i=0}^{\floor{c}} \frac{d^i}{i!} &=O(1)\cdot {\bar a}^{d/2} \sum_{i=0}^{\floor{c}} \frac{d^i}{i!}\\
&= O(c)\cdot {\bar a}^{d/2}\cdot \frac{d^{\floor{c}}}{\floor{c}!} \\
&= O(c)\cdot {\bar a}^{d/2} \Big(\frac{d\cdot e}{\floor{c}}\Big)^{\floor{c}} \\
&= O(c)\cdot {\bar a}^{d/2} \Big(\frac{d\cdot e}{\xi\cdot d}\Big)^{\xi\cdot d} \\
&= O(c)\cdot {\bar a}^{d/2} \cdot {\bar a}^{-d/2} \\
&= O(c),
\end{align*}
where we use (a) $d\cdot {\bar a}^{d/2}$ is $O(1)$, (b) $d^i/i!$ is
increasing with respect to $i$ for $i<c<d$, (c)
$x!=\Omega((x/e)^x)$, (d) $c<\xi\cdot d$ and (e) the definition of
$\xi$. This completes the proof of Proposition \ref{tech}.

\end{document}